\theoremstyle{plain}
\newtheorem{theorem}{Theorem}[section]
\newtheorem{proposition}[theorem]{Proposition}
\newtheorem{lemma}[theorem]{Lemma}
\theoremstyle{definition}
\newtheorem{definition}[theorem]{Definition}
\theoremstyle{remark}
\newtcolorbox{framefloat}[1][!tb]{arc=0pt,outer arc=0pt,boxrule=0.4pt,colframe=black,colback=white,float=#1}
\def \A {\mathcal{A}}
\def \B {\mathcal{B}}
\def \D {\mathcal{D}}
\def \M {\mathcal{M}}
\def \P {\mathcal{P}}
\def \T {\mathcal{T}}
\def \U {\mathcal{U}}
\def \X {\mathcal{X}}
\def \bit {\{0,1\}}
\def \unit {[0,1]}
\def \unito {[0,1)}
\def \munit {[-1,1]}
\def \nat {\mathbb{N}}
\def \rr {\mathbb{R}}
\DeclareMathOperator*{\expec}{\mathbb{E}}
\DeclareMathOperator*{\pr}{\text{Pr}}
\DeclareMathOperator*{\argmax}{arg\,max}
\DeclareMathOperator*{\argmin}{arg\,min}
\newcommand{\cbra}[1]{\left\{ #1 \right\}}
\newcommand{\sbra}[1]{\left[ #1 \right]}
\newcommand{\rbra}[1]{\left( #1 \right)}
\newcommand{\ind}{\mathbbm{1}}
\newcommand{\bunderbrace}[2]{%
  \begin{array}[t]{@{}c@{}}
  \underbrace{#1}\\
  #2
  \end{array}
}
\newcommand\numberthis{\addtocounter{equation}{1}\tag{\theequation}}
\newcounter{protocol}
\def \alg {\textup{Alg}}
\def \opt {\textup{Opt}}
\def \reg {\textup{Reg}}
\def \Mhedge {\M^\textup{Hedge}}
\def \Mwdp {\M^\textup{wDP}}
\def \qwdp {q^{\eta\textup{--wdp}}}
\def \pisr {\pi^\textup{sr}}
\def \picom {\pi^\textup{com}}
\def \pipl {\pi^\textup{pl}}
\def \pivcg {\pi^\textup{vcg}}
\def \Pipl {\Pi^\textup{PL}}
\def \Pivcg {\Pi^\textup{VCG}}
\def \Pimaxmin {\Pi^\textup{MMF}}
\def \qhedge {q^\textup{hedge}}
\def \DSIC {\textup{DSIC}}
\def \NIC {\textup{NIC}}
\def \NICOM {\textup{NICOM}}
\begin{document}

\title{Nash Incentive-compatible Online Mechanism Learning \\ via Weakly Differentially Private Online Learning}

\author[1]{Joon Suk Huh}
\author[1]{Kirthevasan Kandasamy}
\affil[1]{Department of Computer Science, University of Wisconsin--Madison}
\date{\vspace{0ex}}

\maketitle

\begin{abstract}
We study a multi-round mechanism design problem, where we interact with a set of agents over a sequence of rounds.
We wish to design an incentive-compatible (IC) online learning scheme to maximize an application-specific objective within a given class of mechanisms, without prior knowledge of the agents' type distributions.
Even if each mechanism in this class is IC in a single round, if an algorithm naively chooses from this class on each round, the entire learning process may not be IC against non-myopic buyers who appear over multiple rounds.
On each round, our method randomly chooses between the recommendation of a weakly differentially private online learning algorithm (e.g., Hedge), and a commitment mechanism which penalizes non-truthful behavior.
Our method is IC and achieves $O(T^{\frac{1+h}{2}})$ regret for the application-specific objective in an adversarial setting, where  $h$ quantifies the long-sightedness of the agents.
When compared to prior work, our approach is conceptually simpler,
it applies to general mechanism design problems (beyond auctions), and its regret scales gracefully with the size of the mechanism class.
\end{abstract}
\section{Introduction}\label{sec:intro} 
In mechanism design, a social planner designs a protocol,
where a set of agents report their \emph{type} to the mechanism, and the mechanism chooses an outcome based on the reported types. It is often desirable, if not necessary, that a mechanism be \textit{incentive-compatible} (IC), i.e. it is in the best interest for each agent to report their types truthfully. When designing the mechanism, we are interested in optimizing some application-specific objectives.
This typically requires knowledge of the distribution on agents' types. As examples, consider a seller who wishes to design either a reserve price auction or a posted price mechanism to sell an item. In both cases,
to maximize revenue (objective), the seller requires knowledge of the distribution of buyers' (agents) valuations (types) for the item.

However, in practice, this type distribution is often unknown \textit{a priori}. Consequently, learning an optimal mechanism from data has been one of the central themes in the intersection of machine learning and game theory. Several prior works have studied learning optimal mechanisms from an i.i.d. dataset of types in offline settings~\citep{cole2014sample,daskalakis2014complexity,balcan2016sample,balcan2018general}. In this work, our focus is on online settings, where there is a stream of types and objectives, and the goal is to design \emph{no-regret} learning procedures which need to trade-off between exploring candidate mechanisms in a mechanism class, while simultaneously exploiting current knowledge to maximize the desired objective~\citep{kleinberg2003value,blum2005near}.

This problem of online mechanism learning has a distinctive challenge: the data sources are themselves strategic agents endowed with their own \emph{long-term} utilities when interacting with the mechanism over multiple rounds. These agents may misreport their type for their own long-term benefit.
This is detrimental to learners who will be unable to estimate the true frequency of types and optimize the given objective.
In particular, even if a multi-round mechanism is IC in a single round, it does not necessarily translate to IC over the \emph{entire} learning process. Agents may find it beneficial to misreport their type in the early rounds while incurring some short-term loss, so as to manipulate later outcomes in their favor, resulting in long-term gains.

To illustrate this, consider designing a revenue-optimal posted price mechanism (PPM)\footnote{%
This description is different, yet equivalent to another common interpretation of the PPM, where the seller announces the price $p$ and buyers buy if their value is larger than $p$.
}, where a seller has an infinite supply of a single item. The seller chooses a price $p$ without revealing it to the buyers. The buyers report their value, and the seller sells the item to those whose value exceeds $p$ and charges them $p$. This mechanism is IC on any single round. However, if the seller naively repeats this over a sequence of rounds, and uses reported values from previous rounds to set the current price, buyers may bid lower in the early rounds. While this deceitful bidding strategy may risk not getting the item early on, it would misguide a learning algorithm to think the current price is too high, and lower future prices, benefiting the buyer.
Consequently, the seller will not be able to optimize her long-term revenue.

Recent works have recognized this challenge and developed learning algorithms for auctions with various strategy-robustness and IC guarantees~\citep{amin2013learning,mohri2014optimal,mohri2015revenue,drutsa2017horizon,drutsa2018weakly,liu2018learning,drutsa2020reserve,abernethy2019learning,deng2020robust,mirrokni2020non}.
However, all methods above only apply to auctions, and not general mechanism design problems.
Moreover,  in most work, there is only a single parameter to be learned in the mechanism (e.g. reserve price in an auction, selling price in a PPM), and there is no natural extension when there are multiple learnable parameters (e.g. VCG auctions with several reserve prices~\citep{hartline2009simple,roughgarden2019minimizing}).

Our approach offers the following advantages when compared to prior work.
First, it applies to general mechanism design problems with application-specific objectives, including mechanisms without money~\citep{procaccia2013approximate,nissim2012approximately}.
Second, it can be naturally applied to settings with multiple learnable parameters. In particular, by leveraging a weak notion of differential privacy, we are able to avoid the curse of dimensionality in large parameter spaces.
Third, our method is conceptually much simpler than many of the existing work, as it is based on the well-known Hedge algorithm~\citep{hazan2016introduction}.
\subsection{Summary of Contributions and Main Results}\label{subsec:contrib}

Our main contributions in this work are as follows:
\begin{enumerate}
    \item In~\S\ref{sec:icoml}, we present a framework for IC online learning of mechanisms with full-information feedback.
    In our framework, an \emph{online mechanism} chooses a single-round mechanism on each round; an environment realizes agents' types and the social planner's objective. Agents report their types, and then the single-round mechanism chooses an outcome (see Protocol \ref{protocol:1}).
    Our framework applies to general mechanism design problems where a social planner may have different objectives on each round, and agents may have different types on each round.
    
    \item In~\S\ref{subsec:design-nicom}, we show that, under certain conditions on the agents' long-sightedness (Definition \ref{def:long-sightedness}), any \emph{weakly} differentially private (DP) sequence (Definition \ref{def:weak-dp-seq}) of distributions over mechanisms can be used as a primitive for building an IC online learning mechanism. Our notion of a weak DP sequence is a considerably weaker technical requirement than the usual notion of DP for sequential algorithms used to build IC online mechanisms in prior work \cite{jain2012differentially,abernethy2019learning}.
    
    \item In~\S\ref{subsec:learning-via-hedge}, we use the above result to construct an IC online learning mechanism using the well-known Hedge (a.k.a. exponential weights) algorithm \cite{hazan2016introduction}. As we show, Hedge provides a weak DP sequence. 
    Our learning algorithm is \emph{(i)} Nash incentive-compatible (NIC, Definition~\ref{def:nic-online}) and \emph{(ii)} achieves $O\big((\log |\Pi|)\cdot T^{(1+h)/2}\big)$ regret compared to the best mechanism in any NIC mechanism class $\Pi$. Here $h\in\unito$ is an exponent that characterizes the long-sightedness of agents (Definition \ref{def:long-sightedness} and Theorem \ref{thm:hedge-mech}). 
    
    \item In~\S\ref{sec:examples} and Appendix \ref{app:resource-alloc}, we apply our framework to develop no-regret NIC online learning mechanisms for various mechanism design problems. They are \emph{(i)} the $k$-facility location problem \cite{lu2010asymptotically,fotakis2013winner} where the goal is to minimize the average distance between citizens and their closest facilities, \emph{(ii)} Social welfare maximization with \textit{ex-post} externality via the VCG mechanism with bidder-specific reserve prices~\citep{hartline2009simple,roughgarden2019minimizing}, and \emph{(iii)} a resource allocation problem~\citep{demers1989analysis} where we wish to maximize the number of jobs completed.
\end{enumerate}

\textit{Key insight.} While prior works have used DP to achieve IC for auctions in sequential settings, the standard notion of DP for sequential algorithms~\citep{jain2012differentially} is stronger than necessary. 
Our key insight is in recognizing that a weaker notion of DP for sequential algorithms is sufficient, and moreover that the vanilla Hedge algorithm satisfies this property. This helps us reduce the dependence to logarithmic in $|\Pi|$. Conveniently, this approach also provides a straightforward extension to general mechanism design problems, beyond auctions.

\textit{Algorithm overview.} On each round, our construction randomly chooses a lottery of two options. In the first, we follow a mechanism in $\Pi$ recommended by Hedge, and in the other, we implement a problem-specific \emph{commitment mechanism} that penalizes any non-truthful behavior~\cite{nissim2012approximately}. Then, the agents report their types after seeing the lottery of mechanisms, which includes the probability of each option that is set based on problem-specific factors.

\textit{Proof overview.}
On each round, the recommendation from Hedge can be viewed as a sequence of randomized maps from past observations to a single-round mechanism. We show that \textit{each} randomized map in the sequence being $\eta$-DP is enough to guarantee that truth-telling is a Nash equilibrium in the online setting (Theorem \ref{thm:ic-via-dp}), for some choice of $\eta\geq 0$.
Specifically, while agents can gain by lying in Hedge, the weak DP property ensures that this gain is bounded. Then, the commitment mechanism cancels out this gain by sufficiently penalizing untruthful reporting.
While similar intuitions have been used in prior auction work,
the key challenge of applying this idea with our notion of weak DP is that the agents' potential gains are functions of the full sequence of outcomes up to the final round $T$. Hence, using sequential composition \cite{dwork2014algorithmic}, we can only bound the agents' potential gains by an $\eta T$ factor of the maximum utility, which grows linearly with $T$.

To address this, we rely on a backward-induction argument. First, we show that an agent has no incentive to deviate from truth-telling in the last round. Then, we recursively apply this argument to demonstrate that truth-telling is the best strategy in all rounds. The key observation is that our weak DP notion is sufficient for this proof technique to work.

\subsection{Discussion of Most Relevant Works}\label{subsec:relevant-work}
Due to space constraints, we discuss only the most relevant works here, and defer a detailed overview to Appendix~\ref{app:related-work}. For discussions on previous regret bounds for settings similar to ours \citep{abernethy2019learning,liu2018learning,amin2013learning}, we refer readers to~\S\ref{subsec:discussion-context-reg}.

One of the closest lines of research to ours is dynamic mechanism design~\citep{baron1984regulation,thomas1990income,courty2000sequential,bergemann2010dynamic,ashlagi2016sequential,mirrokni2016dynamic,mirrokni2020non,bergemann2019dynamic,deng2019non}. In dynamic mechanism design, the seller possesses precise knowledge of the buyer's value distribution across all rounds. Leveraging this information, the seller designs a revenue-maximizing dynamic mechanism by adjusting the mechanism's rule based on the buyers' past bids. Recently, dynamic mechanism design has been extended to the setting where the exact distributional knowledge requirement is relaxed to approximate ones~\citep{golrezaei2019dynamic,deng2019robust,deng2020robust}. However, even with these extensions, dynamic mechanism design requires a strong assumption of the existence of an underlying distribution on buyers' values, and often necessitates all buyers to share their beliefs. Our approach does not require any of this: We allow the types of agents to be arbitrary sequences of time. Moreover, our Nash incentive-compatibility guarantee does not require agents to share their exact beliefs other than everyone reporting their types truthfully. Furthermore, to the best of our knowledge, dynamic mechanism design has primarily focused on mechanisms involving monetary transactions, whereas our framework operates independently of any payment or quasilinear utility assumptions.
 
Besides the dynamic mechanism design literature, the closest work to ours is \citet{abernethy2019learning}, who delve into an online Bayesian optimal auction design problem. They use a strongly DP online learning algorithm to design a version of the Myerson auction that sells a limited supply of identical items to multiple buyers. They show that this approach is $\eta$-bid-approximate Bayesian incentive-compatible, meaning that every agent submitting bids within $\eta$ proximity to their true values is a Nash equilibrium with respect to each agent's Bayesian belief. Our IC guarantee, Nash incentive compatibility, is stronger as it does not depend on any prior type distribution and holds for all true agent types (i.e. in an \emph{ex-post} sense). Moreover, our IC guarantee is valid even when all previous information is \textit{public} to all agents, while previous works \citep{liu2018learning,abernethy2019learning} rely on the private information assumption where agents cannot observe other agents' previous bids.

To realize an incentive-compatible online mechanism, our method uses a lottery between an online learning algorithm and a \emph{commitment mechanism} which penalizes agents for non-truthful behavior.
The technique of combining a differentially private algorithm with a commitment mechanism was originally introduced by~\citet{nissim2012approximately} to design approximately optimal singe-round mechanisms. We extend their approach to the online setting without losing efficiency in the sense that our regret bound is logarithmic in $|\Pi|$. 


\section{Background}\label{sec:prelim}

In this section, we provide a brief overview of mechanism design and differential privacy.
\paragraph{Notation and assumptions.}
For $d\in\nat$, let $[d]:=\cbra{1,\cdots,d}$. Let $\ind\!\sbra{\,\cdot\,}$ be the indicator function. For a set $S$, let $\Delta(S)$ denote the set of probability distributions over $S$. We use $t\in[T]$ ($i\in[n]$) to denote rounds (agents). For any set of round indexed variable $\{x_t\}$, let $x_{<t}:=(x_\tau)_{\tau\in[t-1]}$ and $x_{\leq t}:=(x_\tau)_{\tau\in[t]}$. Throughout this paper, we assume types of agents are discrete, that is, we set $\Theta_i$ introduced in the next subsection as a countable set.

\subsection{Mechanism Design}\label{subsec:mech-design}
There are $n$ agents, denoted $[n]$. Agent $i$ has type $\theta_i\in\Theta_i$, where $\Theta_i$ is the set of available types.
Each agent reports their type to a mechanism, which may not necessarily be the same as their true type $\theta_i$. We use $b_i\in\Theta_i$ to denote the type reported by agent $i$. Based on the reported types, the mechanism, possibly randomly, chooses an outcome $s\in S$, where $S$ is the set of available outcomes.
Then, each agent $i$ experiences utility $u_i(\theta_i,s)$.
The social planner's satisfaction is given by an (application-specific) \emph{objective} $G(\theta,s)\in[-1,1]$, where $\theta:=(\theta_i)_{i\in[n]}$ is all agent types.

A mechanism $\pi$ is a map from reported types to a distribution over outcomes in $S$, i.e. $\pi:\prod_{i\in[n]}\Theta_i  \rightarrow\Delta(S)$, where $\pi(b)$ is the distribution induced on the set of outcomes when the agents report $b:=(b_i)_{i\in[n]}$. An outcome $s$ is then sampled from $\pi(b)$.
The goal of mechanism design is to design $\pi$ which has large expected objective $\expec_{s\sim\pi(\theta)}G(\theta,s)$, when the agents are reporting truthfully, i.e. $b_i = \theta_i$ for all $i\in[n]$.

However, as agents may misreport their types to maximize their utility, a mechanism should also be IC, i.e. incentivizes agents to report their types truthfully. To state this formally, with slight abuse of notation, let
\begin{align}
    u_i\big(b;\pi,\theta_i\big):=\expec_{s\sim \pi(b_i,b_{-i})}\left[u_i(\theta_i,s)\right],
    \label{eq:expectedutilitysingle}
\end{align}
denote the expected utility of agent $i$ of type $\theta_i$, given reported types $b$ under mechanism $\pi$.
Here, $u_i$ on the RHS is the utility of the agent for outcome $s$ as defined above.

One common notion of IC is the dominant-strategy incentive-compatibility (\DSIC). We say that $\pi$ is \DSIC\ if truth-telling is better for all agents regardless of the types reported by the other agents. That is, for all $i\in[n]$, $\theta_i,b_i\in\Theta_i$ and $b_{-i}\in\Theta_{-i}:=\prod_{j\in[n]\setminus\{i\}}\Theta_j$,
we have
\begin{align}
    u_i\big(\theta_i,b_{-i};\pi,\theta_i\big)\geq u_i\big(b_i,b_{-i};\pi,\theta_i\big).
    \label{eq:dsicsingle}
\end{align}
While \DSIC\ is a strong IC requirement, it is not always attainable and hence, it is common to relax this requirement. In this work, we consider Nash incentive compatibility (\NIC). We say that $\pi$ is \NIC, if 
for all $i\in[n]$, $\theta_i,b_i\in\Theta_i$ and $\theta_{-i}\in\Theta_{-i}$, we have
\begin{align}
    u_i\big(\theta_i,\theta_{-i};\pi,\theta_i\big)\geq u_i\big(b_i,\theta_{-i};\pi,\theta_i\big).
    \label{eq:nicsingle}
\end{align}
That is, given that other agents report their types truthfully ($b_{-i} = \theta_{-i}$), reporting one's type truthfully is an optimal strategy. This means that truth-telling is a Nash equilibrium.
Since we aim to learn a mechanism in online, adversarial environments without assuming type distributions, NIC is appropriate for our setting.
NIC is also stronger than Bayes-Nash incentive compatibility (BNIC), which only guarantees truth-telling in expectation over type distributions~\citep{nisan2007algorithmic}.

In this work, we study finding an optimal mechanism in a given class of mechanisms $\Pi$. We  say that $\Pi$ is \NIC\ (\DSIC) if all $\pi\in\Pi$ is \NIC\ (\DSIC). 

\subsection{Differential Privacy}\label{subsec:diff-priv}
Next, we briefly review differential privacy. We refer readers to \citet{dwork2014algorithmic} for a comprehensive overview. 
Let $X$ be an input domain and $X^m$ be a Cartesian product of $X$, representing the space of datasets of size $m$. Let $Y$ be an output space. We define DP as follows. 

\begin{definition}[Differential privacy]
    \label{def:dp}
    A mapping $q:X^m\rightarrow\Delta(Y)$ is $\eta$-differentially private ($\eta$-DP) if for all $A\subseteq Y$ and all inputs $x=(x_1,\dots,x_m)\in X^m$ and $x'=(x'_1,\dots,x'_m)\in\X^m$ that differ in at most one entry ($x_j\neq x'_j$ for at most one $j\in[m]$), the following holds:
    \begin{align*}
       \pr_{y\sim q(x)}\big(\,y\in A\,\big)\leq e^\eta\pr_{y\sim q(x')}\big(\,y\in A\,\big).
    \end{align*}
\end{definition}
Intuitively, this means that the output of an $\eta$-DP algorithm does not change significantly if we change one input data point.
In sequential learning settings, we extend this definition as follows.

\begin{definition}[Weak $\eta$-DP sequence]
    Let $q_{\leq T}:=(q_t)_{t\in[T]}$ be a sequence of mappings where $q_t:X^{t-1}\rightarrow\Delta(Y)$. We say that $q_{\leq T}$ is a weak $\eta$-DP sequence if $q_t$ is $\eta$-DP for each $t\in[T]$.
    \label{def:weak-dp-seq}
\end{definition}

Readers familiar with the literature will recognize that the above definition is weaker than the more standard definition (e.g.~\citet{jain2012differentially}), which requires that $\prod^T_{t=1}q_t(\cdot)$, viewed as a map from $X^T$ to $\Delta(Y^T)$ is $\eta$-DP.
While the stronger definition is relevant from a data privacy perspective, any compliant algorithm often comes with inevitable performance degradation.
However, for our purpose of designing an IC online learning method, the weaker extension in Definition~\ref{def:weak-dp-seq} is sufficient.
In particular, while the vanilla Hedge algorithm
is not strongly $\eta$-DP~\citep{jain2012differentially}, in~\S\ref{subsec:learning-via-hedge}, we show that it is weakly $\eta$-DP. In fact, this follows immediately from a well-known DP technique called the exponential mechanism, described below.

\begin{proposition}[Exponential mechanism \cite{mcsherry2007mechanism}]
    Consider a map $q:X^m\rightarrow\Delta(Y)$ such that $q(x)(y)\propto \exp\!\rbra{\eta\,g(x,y)}$, for some $g:X^m\times Y\rightarrow\rr$, for all $x\in X^m$ and $y\in Y$. Let $\Delta g:=\max_{x,x',y}|g(x,y)-g(x',y)|$ where the maximum is taken over $x,x'\in X^m$ that differ in at most one entry. Then $q$ is $2\eta\Delta g$-DP.
    \label{prop:exp-mech}
\end{proposition}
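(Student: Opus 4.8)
The plan is to prove Proposition~\ref{prop:exp-mech} directly from Definition~\ref{def:dp} by bounding, for any $y \in Y$ and any pair of neighboring inputs $x, x' \in X^m$, the ratio $q(x)(y)/q(x')(y)$ of the (densities of the) two output distributions, and then integrating/summing over an arbitrary event $A \subseteq Y$. Since $q(x)(y) \propto \exp(\eta g(x,y))$, the normalizing constant is $Z(x) := \sum_{y' \in Y} \exp(\eta g(x,y'))$ (we are in the discrete-types regime, so sums rather than integrals), and similarly $Z(x')$. Hence
\begin{align*}
\frac{q(x)(y)}{q(x')(y)} = \frac{\exp(\eta g(x,y))}{\exp(\eta g(x',y))} \cdot \frac{Z(x')}{Z(x)}.
\end{align*}

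First I would bound the first factor: by definition of $\Delta g$ and the fact that $x, x'$ differ in at most one entry, $|g(x,y) - g(x',y)| \leq \Delta g$, so $\exp(\eta g(x,y))/\exp(\eta g(x',y)) \leq e^{\eta \Delta g}$ (assuming $\eta \geq 0$, which is implicit). Next I would bound the ratio of normalizers: for each $y' \in Y$ we again have $\exp(\eta g(x',y')) \leq e^{\eta \Delta g}\exp(\eta g(x,y'))$, and summing over $y'$ gives $Z(x') \leq e^{\eta \Delta g} Z(x)$, i.e. $Z(x')/Z(x) \leq e^{\eta \Delta g}$. Multiplying the two bounds yields $q(x)(y)/q(x')(y) \leq e^{2\eta \Delta g}$ pointwise in $y$. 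Finally, for any $A \subseteq Y$, $\pr_{y \sim q(x)}(y \in A) = \sum_{y \in A} q(x)(y) \leq e^{2\eta \Delta g} \sum_{y \in A} q(x')(y) = e^{2\eta \Delta g}\,\pr_{y \sim q(x')}(y \in A)$, which is exactly the $2\eta\Delta g$-DP condition.

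There is no serious obstacle here — the argument is the standard exponential-mechanism calculation, and the factor of $2$ (rather than $1$) comes precisely from having to control both the exponent ratio and the normalizing-constant ratio, each contributing a factor $e^{\eta\Delta g}$. The only points requiring a word of care are: (i) well-definedness of $q$, which needs $Z(x)$ to be finite and positive for every $x$ — this should be assumed as part of the hypothesis that $q$ is a valid map into $\Delta(Y)$, and in our applications $Y$ is finite so it is automatic; and (ii) noting that the bound in the definition of $\Delta g$ is symmetric in $x, x'$, so the inequality holds with the roles of $x$ and $x'$ interchanged as well, giving the two-sided guarantee implicit in the definition of DP. I would write the proof in three short displayed steps (exponent ratio, normalizer ratio, combine and sum over $A$), totaling a few lines.
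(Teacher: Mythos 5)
Your proof is correct and is the standard exponential-mechanism calculation; the paper itself does not prove Proposition~\ref{prop:exp-mech} but simply cites \citet{mcsherry2007mechanism}, whose argument is exactly the one you give (bound the exponent ratio and the normalizer ratio by $e^{\eta\Delta g}$ each, multiply, and sum over the event $A$). Your side remarks on finiteness of the normalizer and the symmetry of the bound are apt but not needed in the paper's discrete, finite-$\Pi$ setting.
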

\section{Nash Incentive-compatible Online Mechanism Learning}\label{sec:icoml}
In this section, we describe the problem of Nash incentive-compatible online mechanism learning.
\subsection{Online Mechanism Learning}
A set of $n$ agents repeatedly interact with an online mechanism over a sequence of $T$ rounds, although not all agents need to appear on all rounds.
On each round, the environment chooses types $\theta_{i,t}$ one for each participating agent and the social planner's objective $G_t$. Simultaneously, the online mechanism $\M$ uses past information, to choose a single-round mechanism $\pi_t$ to collect agent types and choose an outcome $s_t\in S$.
Our goal is to design a learning algorithm that is competitive against the single best mechanism in hindsight in some mechanism class $\Pi$. We have outlined the protocol for online mechanism learning in Protocol~\ref{protocol:1}. Next, will now describe the individual components.

\begin{framefloat}[t]
    \textbf{Protocol 1.} Online Mechanism Learning \refstepcounter{protocol}\label{protocol:1}\\
    \vspace{-0.5em}\\
    For each round $t=1,\dots,T$:
    \begin{enumerate}
        \vspace{0.5em}
        \item $\M$ chooses a single-round mechanism $\pi_t$ based on past reported types $b_{<t}$ and observed objective functions $G_{< t}$.
        \vspace{0.5em}
        \item Each participating agent observes $\pi_t$ and reports their types $b_{i,t}\in\Theta_i$.
        \vspace{0.5em}
        \item $\M$ executes $\pi_t$ with input $b_t$ and realizes an outcome $s_t\sim\pi_t(b_t)$.
        \vspace{0.5em}
        \item Each participating agent experiences utility $u_i(\theta_{i,t},s_t)$.
        \vspace{0.5em}
        \item $\M$ observes $G_t$ and experiences $G_t\big(\theta_t,s_t\big)$.
    \end{enumerate}
\end{framefloat}

\textbf{Agents.}
There are $n$ agents, indexed by $i\in[n]$. 
Each agent $i\in[n]$ appears on a subset $\T_i\subset [T]$ of the rounds.
Let $\theta_i:=(\theta_{i,t})_{t\in\T_i}$ where $\theta_{i,t}\in\Theta_i$ be the sequence of types for agent $i$ in the rounds that she participates. 
Let $\gamma_i:[T]\rightarrow\unit$, be an agent-specific monotonically non-increasing function which discounts the agent's future utilities. 
If the mechanism chooses outcomes $(s_t)_{t\in [T]}$ on each round, the agent's utility is
$\sum_{t\in\T_i}\gamma_i(t)u_i(\theta_{i,t},s_t)$.

If an agent participates in round $t$, she reports her type $b_{i,t}\in\Theta_i$, not necessarily truthfully to the mechanism.
If an agent does not participate, we will set the reported and true types as $b_{i,t}=\theta_{i,t}=\perp$.
For what follows, denote $b_t:=(b_{i,t})_{i\in[n]}$, $b_{<t}:=(b_\tau)_{\tau\in[t-1]}$ and $b_{\leq T}:=(b_t)_{t\in[T]}$. Define $\theta_t$, $\theta_{<t}$ and $\theta_{\leq T}$ similarly. Finally, we write the agents as a tuple: 
\begin{align*}
    A:=\rbra{n,T,(\theta_t)_{t\in[T]},(\T_i)_{i\in[n]},(\gamma_i)_{i\in[n]}}   
    \numberthis\label{eq:agents}
\end{align*}

\textbf{Objectives.} The social planner's objective functions are given by $G_t:\prod_{i\in[n]}\!\rbra{\Theta_i\cup\{\perp\}}\!\times\!S\!\rightarrow\! \munit$ for each round $t\in[T]$. These objectives could be application-specific, and be chosen by an oblivious adversary.

\textbf{Online mechanism.}
An online mechanism $\M = (\M_t)_{t\in\nat}$, is a sequence of policies, where $\M_t$ maps past reported types $b_{<t}$ and objectives $G_{<t}$ to a distribution over $\U$, where $\U:=\{\pi:\prod_{i\in[n]}\!\rbra{\Theta_i\cup\{\perp\}}\rightarrow\Delta(S)\}$ is the class of \emph{all} single-round mechanisms.
Some $\pi_t$ is sampled from this distribution $\M_t(b_{<t},G_{<t})$ and executed on round $t$.

\textbf{Regret.}
To define the regret,
first consider the cumulative objective attained by an online mechanism $\M$, given reported types, $b_{\leq T}$ with respect to agents $A$.
\begin{align*}
    \alg\big(b_{\leq T},\M;A\big):=\expec\!\sbra{\sum^T_{t=1}G_t\big(\theta_t,s_t\big)}.
\end{align*}
Here, the expectation is over the sequence of outcomes $s_{\leq T}$ generated by running Protocol \ref{protocol:1} with $\M$, $A$ and $b_{\leq T}$. We compare the above quantity to the optimal cumulative objective $\opt(\Pi;A)$ attainable within a class of mechanisms $\Pi$ with respect to agents $A$:
\begin{align*}
    \opt(\Pi;A)&:=\max_{\pi\in\Pi}\sum^T_{t=1}\expec_{s\sim \pi(\theta_t)}G_t(\theta_t,s)
\end{align*}
The performance of $\M$ is measured by the gap between the above two quantities, which is called the regret:
\begin{align*}
    \reg\big(b_{\leq T},\M,\Pi;A\big):=\opt(\Pi;A)-\alg\big(b_{\leq T},\M;A\big).
\end{align*}

\subsection{Nash Incentive Compatibility}
We next describe the IC requirement for an online learning mechanism.
For this, let $h_{<t}$ denote the history up to round $t-1$.
It consists of the previous single-round mechanisms chosen, submitted \emph{and} true types of all agents, outcomes, and objectives, i.e. $h_{<t}:=(\pi_{<t},b_{<t},\theta_{<t},s_{<t},G_{<t})$.

We will assume that throughout the $T$ rounds, agents are aware of their utility function $u_i$, and their true types $\theta_i = (\theta_{i,t})_{t\in\T_i}$ in past, current, and future rounds.
Moreover, on round $t$, agents may have access to the realized history $h_{<t}$ while not required. Note that while agents may know the true types, the online mechanism does not.

\textbf{Agent's strategy.}
An agent $i$'s strategy $\sigma_i:=(\sigma_{i,t}(\cdot))_{t\in\T_i}$ is a sequence of policies,
where $\sigma_{i,t}$ maps the realized history 
$h_{<t}$ to a reported type.
That is, $b_{i,t} = \sigma_{i,t}(h_{<t})\in\Theta_i$.
If agents follow randomized strategies, our results will hold for all realizations of this strategy.
We use $\sigma:=(\sigma_i)_{i\in[n]}$ to denote a strategy profile of all agents and $\sigma_{-i}$ to denote a strategy profile except agent $i$'s.

The \emph{truth-telling strategy} for agent $i$
reports the true type regardless of history.
With a slight abuse of notation, we denote this as $\theta_i:=(\theta_{i,t}(\cdot))_{t\in\T_i}$ where $\theta_{i,t}(\cdot):=\theta_{i,t}$. We call $\theta:=(\theta_i)_{i\in[n]}$ the truth-telling strategy profile.

\paragraph{Nash incentive-compatible learning.}
To define the NIC requirement, we should define the expected utility for an agent.
Given an online mechanism $\M$, agents $A$, and strategy profile $\sigma$,
with a slight abuse of notation,
we denote the expected utility for agent $i$ as
\begin{equation}
    u_i(\sigma;\M,A):=\expec\!\sbra{\,\sum_{t\in\T_i}\gamma_i(t)u_i(\theta_{i,t},s_t)}.
    \label{eq:expectedutilityonline}
\end{equation}
Here, $u_i$ on the RHS is the utility of the agent with type $\theta_{i,t}$ for outcome $s_t$, and is similar to the RHS of~\eqref{eq:expectedutilitysingle}.
Moreover,  the expectation is over the joint distribution of $s_{\leq T}$, induced by the interaction between the strategies $\sigma$, mechanism $\M$, and agents $A$ via Protocol \ref{protocol:1}.
We can now extend the definition of NIC (see~\eqref{eq:nicsingle}) for online mechanisms.

\begin{definition}[\NIC, Online setting]
    An online mechanism $\M$ is Nash incentive-compatible (\NIC) with respect to agents $A$, if  for every agent $i$,
    \begin{align*}
        u_i(\theta_i,\theta_{-i};\M,A)\geq u_i(\sigma'_i,\theta_{-i};\M,A),
    \end{align*}
    that is, no agent $i$ has incentive to deviate from the truth-telling strategy $\theta_i$, given that other agents are truth-telling. 
    \label{def:nic-online}
\end{definition}

In other words, NIC guarantees that the truth-telling strategy profile is a Nash equilibrium of the game induced by $\M$ and $E$. Finally, we can state our definition of a Nash incentive-compatible online mechanism.

\begin{definition}[\NICOM]
    Let $\A$ be a class of possible agents (i.e. a set of the tuples in~\eqref{eq:agents}), and let $\A_T\subseteq\A$ be the subset of all $A\in\A$ whose total number of rounds is $T$. We say that a mechanism $\M$ is a Nash incentive-compatible online mechanism (\NICOM) for a single-round mechanism class $\Pi$ with respect to an agent class $\A$, if,
    \begin{enumerate}
        \item The regret of $\M$ is sub-linear given that agents report their types truthfully. That is,
        \begin{align*}
            \sup_{A\in\A_T}\reg(\theta_{\leq T},\M,\Pi;A)\in o(T).
        \end{align*}
        \item For every $A\in\A$, $\M$ is \NIC\ with respect to $A$.
    \end{enumerate}
    \label{def:nicom}
\end{definition}
In the above definition, we assume that the mechanism $\M$ has full knowledge of the agent class $\A$, but not of realized agents $A\in\A$ (e.g. their true types) except for $T$.

\paragraph{Long-sightedness.}
Before we proceed, we discuss one more condition necessary to state our results.
If agents prioritize future gains and are willing to incur initial losses for a longer period, incentive-compatible online learning gets more challenging. 
Hence, prior works in this space have assumed that agents either appear a small number of rounds, i.e. $|\T_i| \in o(T)$, and/or that the discounted utilities grow sublinearly, i.e. $\sum^T_{t=1}\gamma_i(t)\in o(T)$ to circumvent this issue~\citep{liu2018learning,abernethy2019learning}.
In fact,~\citet{amin2013learning}'s Theorem 3 implies that if either of these does not hold, it is impossible to design an incentive-compatible posted price mechanism even when there is a single buyer.
Below, we use a similar characterization of an agent's long-sightedness, which combines both definitions.

\begin{definition}[Long-sighted agents]
    The long-sightedness $\alpha$ of the agents $A$ is,
    \begin{align*}
        \alpha(A):=\max_{i\in[n],\,t\in[T]}\min\!\rbra{\frac{1}{\gamma_i(t)}\sum^T_{\tau=t}\gamma_i(\tau),\,|\T_i|}. \\[-0.25in]
    \end{align*}
    Furthermore, for an agent class $\A$, let $\alpha_T(\A):=\sup_{A\in\A_T}\alpha(A)$, where $\A_T$ is the set of all $A\in\A$ whose total rounds are $T$.
    \label{def:long-sightedness}
\end{definition}

When $\gamma_i(t)=\nu^t$ for all $t$, for some $\nu\in\unito$, we have $\alpha(A)\leq \frac{1}{1-\nu}$, which is the maximum (discounted) utility an agent can achieve. On the other extreme, when $\gamma_i(\cdot)=1$, now $\alpha(A)= \max_{i\in[n]}|\T_i|$ is the maximum number of times an agent appears. We will see that our regret bound increases with $\alpha_T(\A)$.
\section{Method}\label{sec:method}
In this section, we describe our key methodological contributions.
In~\S\ref{subsec:design-nicom}, we provide a general recipe for building a NICOM from a weak DP sequence.
Then, in~\S\ref{subsec:learning-via-hedge}, we instantiate this recipe via the Hedge algorithm.
In~\S\ref{subsec:discussion-context-reg} and \ref{subsec:discussion-lim-ext}, we provide a series of discussions on our main result including references to previous bounds for similar settings.
\subsection{Designing NIC Online Mechanisms}
\label{subsec:design-nicom}
\paragraph{Commitment mechanism.}
A key ingredient of our construction is a
\emph{commitment mechanism} $\picom$.
This refers to any strictly DSIC mechanism--usually far from optimal--that penalizes non-truthful behaviors.
Our method will use randomly choose between a mechanism from the target class $\Pi$, and the commitment mechanism.

Below, we define the \textit{penalty gap}, which quantifies the power of a commitment mechanism, i.e. its ability to penalize non-truthful behavior. For this, recall that
$\U$ is the class of all single-round mechanisms.

\begin{definition}[Penalty gap] 
    For a single-round mechanism $\pi\in\U$, the penalty gap $\beta(\pi)$ is the minimum difference in any agent's utility between reporting truthfully and non-truthfully, regardless of the others' behavior. That is,
    \begin{align*}
        \beta(\pi) := \min_{i, \theta_i, b_i\neq \theta_i, b_{-i}}\Bigg[& \expec_{s\sim\pi(\theta_i,b_{-i})}u_i(\theta_i,s)-\; \expec_{s\sim\pi(b_i,b_{-i})}u_i(\theta_i,s)\Bigg].
    \end{align*}
    \label{def:penalty-gap}
\end{definition}

Our method will not require $\picom$ to be in the
target mechanism class $\Pi$.
The purpose of $\picom$ is solely to cancel out any potential gains from lying.
As such, we require a commitment mechanism with $\beta(\picom)>0$, and our regret bound
will scale with how powerful $\beta(\picom)$ is.
As we show in the examples in~\S\ref{sec:examples} and Appendix~\ref{app:resource-alloc}, it is often possible to find a good commitment mechanism for a problem.

\paragraph{NIC via Weak DP Sequences.}
Now, we present a general construction of NICOMs, using a weak differentially private sequence (Definition \ref{def:weak-dp-seq}) and a commitment mechanism.

\begin{definition}[Weak $\eta$-DP mechanism]
    Consider any $\lambda\in\unit$, commitment mechanism $\picom$ and weak $\eta$-DP sequence $(\qwdp_t)_{t}$. Here, $\qwdp_t$ is a map from $b_{<t}$ and $G_{<t}$ to a distribution over $\U$.
    Let $\Mwdp$ be an online mechanism that outputs the following mechanism in~\eqref{eqn:gen-con} for each $t$. Below, $\pisr_t\in\U$ is a single-round mechanism sampled from $\qwdp_t(b_{<t},G_{<t})$.
    We have,
    \begin{align}
        \pi_t(\cdot):=(1-\lambda)\cdot\pisr_t(\cdot)+\lambda\cdot\picom(\cdot).
        \label{eqn:gen-con}
    \end{align}
    \label{def:gen-con}
    \vspace{-0.2in}
\end{definition}
The above definition says that $\Mwdp$ outputs a lottery which realizes the commitment mechanism with probability $\lambda$, and the recommendation of a weak $\eta$-DP sequence $\qwdp_t$ (e.g. an algorithm such as Hedge) with probability $1-\lambda$.

Next, we state our main theoretical result of this subsection.

\begin{restatable}{theorem}{ICviaDP}
    Let $\Pi$ be a \NIC\ single-round mechanism class and $A$ be agents. If $\frac{\lambda}{4\eta}\beta(\picom)\geq\alpha(A)$, then $\Mwdp$ is NIC with respect to agents $A$.
    \label{thm:ic-via-dp}
\end{restatable}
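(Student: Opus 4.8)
The plan is a hybrid (backward‑induction) argument over the rounds in which agent $i$ appears. Fix an agent $i$, assume every other agent is truth‑telling ($\sigma_{-i}=\theta_{-i}$), and --- using the paper's convention that results hold for all realizations of a randomized strategy --- treat $\sigma'_i$ as a deterministic, history‑dependent strategy. Write $\T_i=\{t_1<\dots<t_m\}$. The whole proof reduces to a \emph{one‑step deviation lemma}: if $\sigma'_i$ is truth‑telling on every round of $\T_i$ strictly after some $t\in\T_i$, then the strategy $\sigma''_i$ that agrees with $\sigma'_i$ except that it reports truthfully at round $t$ satisfies $u_i(\sigma''_i,\theta_{-i};\Mwdp,A)\ge u_i(\sigma'_i,\theta_{-i};\Mwdp,A)$. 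Granting this, set $\sigma^{(0)}:=\sigma'_i$ and let $\sigma^{(k)}$ be $\sigma^{(k-1)}$ with its action at round $t_{m-k+1}$ replaced by truth‑telling; then $\sigma^{(m)}=\theta_i$, each $\sigma^{(k-1)}$ is truth‑telling after $t_{m-k+1}$, and the lemma gives $u_i(\sigma^{(k)},\theta_{-i};\Mwdp,A)\ge u_i(\sigma^{(k-1)},\theta_{-i};\Mwdp,A)$. Chaining these yields $u_i(\theta_i,\theta_{-i};\Mwdp,A)\ge u_i(\sigma'_i,\theta_{-i};\Mwdp,A)$, i.e.\ Definition~\ref{def:nic-online}. (As in the Hedge instantiation, I take $\qwdp_t$ to be supported on the \NIC\ class $\Pi$, so each sampled $\pisr_t$ is \NIC.)

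For the lemma, fix $t\in\T_i$ and condition on the history $h_{<t}$, whose law is the same under $\sigma'_i$ and $\sigma''_i$; let $\hat b:=\sigma'_{i,t}(h_{<t})$, and assume $\hat b\ne\theta_{i,t}$ (otherwise the two strategies coincide). The key structural observation --- the reason the \emph{weak} DP notion of Definition~\ref{def:weak-dp-seq} suffices --- is that, since both strategies report truthfully on $\T_i\cap(t,T]$, the other agents are truthful, and the $G_\tau$ are oblivious, for every $\tau>t$ the reported history $b_{<\tau}$ is \emph{deterministic} given $h_{<t}$ and differs between the two scenarios in a single entry, namely the round‑$t$ report tuple (which differs only in agent $i$'s coordinate, $\theta_{i,t}$ vs.\ $\hat b$). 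Decompose the conditional utility difference over $\tau\in\T_i$ with $\tau\ge t$. For $\tau=t$: the law of $\pisr_t$ depends only on $(b_{<t},G_{<t})$, which is part of $h_{<t}$, hence is the same in both scenarios, and the term equals $\gamma_i(t)$ times $(1-\lambda)\expec_{\pisr_t}[\,\cdot\,]+\lambda[\,\cdot\,]$, where the $\pisr_t$‑expectation is $\ge0$ because each realized $\pisr_t\in\Pi$ is \NIC\ and the others are truthful, and the $\picom$‑bracket is $\ge\beta(\picom)$ by Definition~\ref{def:penalty-gap}; so the $\tau=t$ term is $\ge\gamma_i(t)\lambda\beta(\picom)$. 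For $\tau>t$: the round‑$\tau$ contribution is $\gamma_i(\tau)\expec_{\pisr_\tau\sim\qwdp_\tau(b_{<\tau},G_{<\tau})}[(1-\lambda)\expec_{s\sim\pisr_\tau(\theta_\tau)}u_i(\theta_{i,\tau},s)]$ up to a $\pisr_\tau$‑independent additive term that cancels; since the two values of $b_{<\tau}$ are $\eta$‑DP neighbours, applying $\eta$‑DP to the bounded map $\pi\mapsto\expec_{s\sim\pi(\theta_\tau)}u_i(\theta_{i,\tau},s)$ bounds the difference of the two contributions by $\gamma_i(\tau)\,c\,(e^\eta-1)$, where $c$ bounds the range of $u_i$ (e.g.\ $c=2$ for $u_i\in\munit$). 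Taking expectation over $h_{<t}$ and summing,
\begin{align*}
 & u_i(\sigma''_i,\theta_{-i};\Mwdp,A)-u_i(\sigma'_i,\theta_{-i};\Mwdp,A)\\
 &\qquad\ge\ \gamma_i(t)\,\lambda\beta(\picom)-c\,(e^\eta-1)\!\!\sum_{\tau\in\T_i,\ \tau>t}\!\!\gamma_i(\tau)\ \ge\ \gamma_i(t)\big(\lambda\beta(\picom)-c\,(e^\eta-1)\,\alpha(A)\big),
\end{align*}
where the last step uses that $\gamma_i$ is non‑increasing and that both branches of the $\min$ in Definition~\ref{def:long-sightedness} give $\sum_{\tau\in\T_i,\tau>t}\gamma_i(\tau)\le\gamma_i(t)\,\alpha(A)$. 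Taking $u_i$ valued in $\munit$ (so $c=2$) and $\eta\le1$ (so $e^\eta-1\le2\eta$), one has $c\,(e^\eta-1)\le4\eta$, so the hypothesis $\tfrac{\lambda}{4\eta}\beta(\picom)\ge\alpha(A)$ makes the expression nonnegative, proving the lemma.

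I expect the main obstacle to be the one highlighted in the paper's proof overview: applying sequential composition of DP over all $T$ rounds would bound an agent's gain from lying only by a $\Theta(\eta T)$ multiple of the per‑round utility, which a fixed penalty $\beta(\picom)$ cannot offset. Peeling rounds off from the last one circumvents this --- in each one‑step deviation all later reports are truthful, hence deterministic, so only a single $e^\eta$ factor (never a composed $e^{\eta T}$) is ever incurred, and $\beta(\picom)$ absorbs it. The subtle point to get right is exactly that this determinism of the later reports is what makes the single‑entry DP bound valid: with a history‑dependent post‑$t$ strategy, the differing outcome $s_t$ would propagate into many differing entries of $b_{<\tau}$, which is why a deviation at an arbitrary round cannot be analysed in one shot and the hybrid structure is essential.
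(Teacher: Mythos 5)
Your proposal is correct and follows essentially the same route as the paper's proof: a backward induction that repairs the last non-truthful round, gaining $\gamma_i(t)\lambda\beta(\picom)$ there from the \NIC{} class plus the commitment mechanism, and paying at most $4\eta\gamma_i(\tau)$ per later round via the weak-DP property of each $\qwdp_\tau$, with the tail controlled by $\alpha(A)$ exactly as in the paper. The only cosmetic difference is that the paper first reduces to deterministic report sequences (via an $\expec\le\max$ step) and modifies the last non-truthful entry, whereas you phrase the same step as a one-step deviation of history-dependent strategies conditioned on $h_{<t}$; the key observation you highlight---that truthful post-$t$ reports make the two histories single-entry DP neighbours---is precisely what the paper's argument relies on.
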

From this theorem, we can conclude that when $\beta(\picom)>0$ and the agents are not too long-sighted (i.e. $\alpha_T(\A)$ is sub-linear in $T$), for a suitable choice of $\lambda$ and $\eta$, the proposed mechanism $\Mwdp$ is Nash incentive-compatible.

\textit{Proof sketch:}
The weak $\eta$-DP sequence limits how much the agent can earn by changing a reported type in each round, given that other agents' strategies do not depend on histories. Combining the maximum possible earnings limited by the weak $\eta$-DP sequence and the minimum penalty induced by the commitment mechanism, we show that no agent has an incentive to deviate from truth-telling in the last round when other agents are truthful. By recursively applying this argument, we demonstrate that being truthful in all rounds is optimal for any agent whenever others are truthful, making truth-telling a Nash equilibrium.
The full proof is in App.\ref{app:proof:thm:ic-via-dp}. 

\subsection{NIC online learning via Hedge}
\label{subsec:learning-via-hedge}
To apply Theorem~\ref{thm:ic-via-dp} and design a \NICOM, we need to find a sequence of maps (online learners) $\qwdp_{\leq T}$ that satisfies the weak DP constraint, while also 
achieving sub-linear regret. In this subsection, we show that the Hedge algorithm~\citep{hazan2016introduction}, one of the most common online learning algorithms for the expert problem, satisfies both properties.
For simplicity of exposition, we will assume that the mechanism class $\Pi$ is finite. If this is not the case,
we can discretize the mechanism's learnable parameters, and then separately control the discretization error.

\paragraph{Hedge algorithm.}
First, we state the Hedge update rule. Here, we are interested in finding a mechanism $\pi$ in a finite set of mechanisms $\Pi$, which maximizes the sum of objectives $\sum_t G_t(\theta, s)$.
For any $\eta\geq 0$, past reported types $b_{<t}$, observed objectives $G_{<t}$ and a mechanism class $\Pi$, let $\qhedge_{\eta,t}(b_{<t},G_{<t})$ be the probability distribution over $\Pi$ whose probability masses are given by
 \begin{align}
     \qhedge_{\eta,t}(b_{<t},G_{<t})(\pi)\propto\exp\!\rbra{\eta\sum^{t-1}_{\tau=1}F_\tau(b_\tau,\pi)}.
     \label{def:hedge-update}
\end{align}
Here $F_t(\theta_t,\pi):=\expec_{s\sim\pi(\theta_t)}G_t(\theta_t,s)$.
Consequently, we may view each $\qhedge_{\eta,t}(b_{<t})$ as an exponential mechanism (Proposition \ref{prop:exp-mech}) with $g:=\sum^{t-1}_{\tau=1}F_\tau$ and $\Delta g=2$ as $F_t(\cdot,\cdot)\in\munit$. Hence, we immediately conclude that:

\begin{restatable}{lemma}{LemHedgeDP}
    $\big(\qhedge_{\eta,t}\big)_{t\in[T]}$ given in Definition \ref{def:hedge-update} is a weak $4\eta$-DP sequence.
    \label{lem:hedgedp}
\end{restatable}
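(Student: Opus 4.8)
The plan is to invoke the exponential mechanism guarantee (Proposition~\ref{prop:exp-mech}) with the right choice of the score function and sensitivity, and then check that this gives exactly the $4\eta$-DP claim for each map $\qhedge_{\eta,t}$ in the sequence. First I would fix an arbitrary round $t\in[T]$ and an arbitrary fixed objective sequence $G_{<t}$, so that the only varying input is the vector of past reported types $b_{<t}=(b_1,\dots,b_{t-1})\in\prod_{\tau=1}^{t-1}\prod_{i\in[n]}(\Theta_i\cup\{\perp\})$. This is the ``dataset'' of size $m=t-1$, with one entry per past round. Under this identification, \eqref{def:hedge-update} says $\qhedge_{\eta,t}(b_{<t})(\pi)\propto\exp(\eta\,g(b_{<t},\pi))$ where $g(b_{<t},\pi):=\sum_{\tau=1}^{t-1}F_\tau(b_\tau,\pi)$, so $\qhedge_{\eta,t}(b_{<t})$ is precisely an exponential mechanism with privacy parameter $\eta$ (in the notation of Proposition~\ref{prop:exp-mech}, the ``$\eta$'' there is our $\eta$) and score $g$.

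The key step is to bound the sensitivity $\Delta g=\max|g(b_{<t},\pi)-g(b'_{<t},\pi)|$ over pairs $b_{<t},b'_{<t}$ that differ in a single round-entry, say round $\tau_0$. Since $g$ is a sum over rounds and only the $\tau_0$-th summand changes, we get $|g(b_{<t},\pi)-g(b'_{<t},\pi)|=|F_{\tau_0}(b_{\tau_0},\pi)-F_{\tau_0}(b'_{\tau_0},\pi)|$. Now $F_\tau(\cdot,\pi)=\expec_{s\sim\pi(\cdot)}G_\tau(\cdot,s)$ takes values in $[-1,1]$ because $G_\tau$ is valued in $\munit$; hence the difference of two such quantities is at most $2$, i.e. $\Delta g\le 2$. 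Proposition~\ref{prop:exp-mech} then gives that $\qhedge_{\eta,t}$ is $2\eta\Delta g$-DP $=4\eta$-DP. Since $t$ was arbitrary, every map in the sequence is $4\eta$-DP, which is exactly Definition~\ref{def:weak-dp-seq} with parameter $4\eta$, proving the lemma.

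One subtlety I would be careful about: the score $g$ in \eqref{def:hedge-update} formally depends on $G_{<t}$ as well as $b_{<t}$, but DP is only required with respect to changes in the reported types $b_{<t}$ (the agents' inputs), with $G_{<t}$ held fixed; so I would make explicit that we apply Proposition~\ref{prop:exp-mech} with input domain being the space of past reported-type vectors and $G_{<t}$ treated as a fixed parameter of the map. A second, very minor point is to note that ``differ in at most one entry'' for $b_{<t}$ means differing in the full type-profile $b_\tau$ of a single round $\tau$ (not a single agent's report), which is consistent with how the online mechanism receives its input one round at a time; this matches the dataset-of-size-$(t-1)$ viewpoint. The main obstacle, such as it is, is just being precise about these bookkeeping identifications — the analytic content (the sum structure collapsing the sensitivity to a single term bounded by $2$) is immediate given Proposition~\ref{prop:exp-mech} and the $\munit$-boundedness of $G_t$, so there is no hard estimate to grind through.
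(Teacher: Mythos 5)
Your proof is correct and is essentially identical to the paper's argument: the paper likewise views each $\qhedge_{\eta,t}$ as an exponential mechanism with score $g=\sum_{\tau<t}F_\tau$, bounds the sensitivity by $\Delta g\le 2$ since $F_\tau\in\munit$, and applies Proposition~\ref{prop:exp-mech} to conclude $4\eta$-DP for each round. Your extra remarks about holding $G_{<t}$ fixed and interpreting ``one entry'' as one round's type profile are consistent with the paper's intended reading.
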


\textbf{The Hedge \NICOM:} We can now present our main method, an algorithm for online mechanism learning using Hedge. For any $\lambda\in\unit$, a commitment mechanism $\picom$, and mechanism class $\Pi$, let $\Mhedge$ be an online mechanism that outputs the following mechanism for each $t\in[T]$. Below, $\pisr_t\in\Pi$ is a single-round mechansim sampled from $\qhedge_{\eta,t}(b_{<t},G_{<t})$. That is,
\begin{align}
    \pi_t(\cdot):=(1-\lambda)\cdot\pisr_t(\cdot)+\lambda\cdot\picom(\cdot).
\label{eqn:exp-learner}
\end{align}

Our theorem below states that $\Mhedge$ is a \NICOM\ with suitable choices of $\lambda$, $\eta$ and $\picom$. 

\begin{restatable}{theorem}{HedgeLearner}
    Let $\Pi$ be a \NIC\ mechanism class and $\A$ be an agent class. Then, the Hedge learner $\Mhedge$ \eqref{eqn:exp-learner} is a \NICOM\ with regret guarantee
    \begin{align*}
        \sup_{A\in\A_T}\reg(&\theta_{\leq T},\,\Mhedge,\Pi;A)
        \in O\rbra{\big(\log|\Pi|+\beta(\picom)^{-1} \big)\cdot\sqrt{\alpha_T(\A)\,T}}
    \end{align*}    
    for $\eta:=1/\sqrt{\alpha_T(\A)\,T}$ and $\lambda\in\Theta\Big(\frac{\sqrt{\alpha_T(\A)/T}}{\beta(\picom)}\Big)$. In particular, when $\alpha_T(\A)\in O(T^h)$, the regret is in $O\big(T^{\frac{1+h}{2}}\big)$.
    \label{thm:hedge-mech}
\end{restatable}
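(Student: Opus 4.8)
\textit{Proof strategy.}
The plan is to verify the two requirements of a \NICOM\ in Definition~\ref{def:nicom} separately, using Lemma~\ref{lem:hedgedp}, Theorem~\ref{thm:ic-via-dp}, and the textbook regret bound for Hedge as black boxes. The incentive-compatibility requirement is almost immediate: by Lemma~\ref{lem:hedgedp}, $(\qhedge_{\eta,t})_{t\in[T]}$ is a weak $4\eta$-DP sequence, so $\Mhedge$ in~\eqref{eqn:exp-learner} is exactly an instance of the weak-DP mechanism $\Mwdp$ of Definition~\ref{def:gen-con} with DP parameter $4\eta$. Hence Theorem~\ref{thm:ic-via-dp} (which for a weak $\eta'$-DP sequence requires $\tfrac{\lambda}{4\eta'}\beta(\picom)\ge\alpha(A)$, here with $\eta'=4\eta$) guarantees that $\Mhedge$ is NIC with respect to agents $A$ whenever $\tfrac{\lambda}{16\eta}\beta(\picom)\ge\alpha(A)$. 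Since $\alpha(A)\le\alpha_T(\A)$ for every $A\in\A_T$ and $\M$ knows $T$, it suffices to choose $\eta,\lambda$ so that $\tfrac{\lambda}{16\eta}\beta(\picom)\ge\alpha_T(\A)$; with $\eta=1/\sqrt{\alpha_T(\A)\,T}$ this rearranges to $\lambda\ge 16\sqrt{\alpha_T(\A)/T}/\beta(\picom)$, which is exactly the stated range $\lambda\in\Theta\!\big(\sqrt{\alpha_T(\A)/T}/\beta(\picom)\big)$, and I would take the smallest admissible such $\lambda$ to keep the regret small. For $T$ large enough this $\lambda$ lies in $\unit$; for the finitely many smaller $T$ (there are finitely many whenever $\alpha_T(\A)=o(T)$, which the $O(T^h)$ case satisfies) I would instead set $\lambda=1$, so $\Mhedge$ just runs the strictly DSIC mechanism $\picom$ on every round (hence NIC), while the trivial bound $|\reg|\le 2T$ is already within the claimed regret bound in that regime.

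For the regret requirement I would work under the truthful profile, so that $b_{\le T}=\theta_{\le T}$ and the distribution $p_t:=\qhedge_{\eta,t}(\theta_{<t},G_{<t})$ on $\Pi$ from which $\pisr_t$ is sampled is a \emph{deterministic} function of $(\theta_{<t},G_{<t})$; note the Hedge weights in~\eqref{def:hedge-update} depend on the expected gains $F_\tau(\theta_\tau,\cdot)$, not on realized outcomes $s_{<t}$, and the objectives are fixed by an oblivious adversary. Expanding the lottery~\eqref{eqn:exp-learner} and averaging over outcomes gives
\begin{align*}
    \alg(\theta_{\le T},\Mhedge;A)=(1-\lambda)\sum_{t=1}^{T}\expec_{\pi\sim p_t}\!\big[F_t(\theta_t,\pi)\big]\;+\;\lambda\sum_{t=1}^{T}\expec_{s\sim\picom(\theta_t)}\!\big[G_t(\theta_t,s)\big].
\end{align*}
Writing $\opt(\Pi;A)=\sum_t F_t(\theta_t,\pi^\star)$ for an optimal $\pi^\star\in\Pi$, the regret splits into a Hedge term and a commitment-cost term:
\begin{align*}
    \reg(\theta_{\le T},\Mhedge,\Pi;A)=\Big(\sum_t F_t(\theta_t,\pi^\star)-\sum_t\expec_{\pi\sim p_t}\!\big[F_t(\theta_t,\pi)\big]\Big)\;+\;\lambda\Big(\sum_t\expec_{\pi\sim p_t}\!\big[F_t(\theta_t,\pi)\big]-\sum_t\expec_{s\sim\picom(\theta_t)}\!\big[G_t(\theta_t,s)\big]\Big).
\end{align*}

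Since $F_t(\cdot,\cdot)\in\munit$, the first bracket is precisely the regret of Hedge run with step size $\eta$ on the sequence of gain vectors $\big(F_t(\theta_t,\pi)\big)_{\pi\in\Pi}$; after the standard affine rescaling of gains from $\munit$ to $\unit$, the classical bound shows it is $O\!\big(\eta^{-1}\log|\Pi|+\eta T\big)$. The second bracket has absolute value at most $2T$, since each of its $T$ summands is a difference of two quantities in $\munit$; hence the commitment cost is at most $2\lambda T$. Plugging in $\eta=1/\sqrt{\alpha_T(\A)T}$ and $\lambda=\Theta\!\big(\sqrt{\alpha_T(\A)/T}/\beta(\picom)\big)$ gives $\eta^{-1}\log|\Pi|=\log|\Pi|\,\sqrt{\alpha_T(\A)T}$, $\eta T=\sqrt{T/\alpha_T(\A)}\le\sqrt{\alpha_T(\A)T}$ (using $\alpha_T(\A)\ge1$), and $2\lambda T=\Theta\!\big(\beta(\picom)^{-1}\sqrt{\alpha_T(\A)T}\big)$, so altogether $\reg\in O\!\big((\log|\Pi|+\beta(\picom)^{-1})\sqrt{\alpha_T(\A)T}\big)$. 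When $\alpha_T(\A)\in O(T^h)$ with $h\in\unito$ this is $O(T^{(1+h)/2})=o(T)$; combined with the NIC property above, $\Mhedge$ meets both conditions of Definition~\ref{def:nicom} and is a \NICOM.

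The one place needing genuine care is keeping the differential-privacy constants straight: Hedge is weakly $4\eta$-DP (Lemma~\ref{lem:hedgedp}) and Theorem~\ref{thm:ic-via-dp} itself carries a factor $1/(4\eta')$, so the binding NIC constraint is $\lambda\ge 16\eta\,\alpha_T(\A)/\beta(\picom)$, and this exact constant propagates into the $2\lambda T$ term of the regret and must be tracked consistently there. Beyond this bookkeeping and the small-$T$ edge case, I do not expect a real obstacle, since Theorem~\ref{thm:ic-via-dp} already supplies the incentive-compatibility argument (the backward induction over rounds) and the regret analysis is a routine lottery decomposition plus the off-the-shelf Hedge bound.
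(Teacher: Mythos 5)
Your proposal is correct and follows essentially the same route as the paper's proof: invoke Lemma~\ref{lem:hedgedp} and Theorem~\ref{thm:ic-via-dp} to get the NIC condition $\tfrac{\lambda}{16\eta}\beta(\picom)\ge\alpha_T(\A)$, then split the regret into the standard Hedge bound plus a $O(\lambda T)$ commitment cost and plug in the stated $\eta$ and $\lambda$. Your explicit lottery decomposition, constant tracking, and handling of the small-$T$ case where $\lambda$ would exceed $1$ are more detailed than the paper's terse argument, but the substance is identical.
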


\textit{Proof sketch:} The proof follows almost directly from the standard regret bound of the Hedge algorithm \cite{hazan2016introduction}. By the linearity of expectation, the commitment mechanism causes a small $\lambda T$ offset to the standard regret. After choosing appropriate parameters, we obtain the stated regret. We give the full proof of Theorem \ref{thm:hedge-mech} in Appendix \ref{app:proof:thm:hedge-mech}.

As mentioned in~\S\ref{subsec:contrib}, thanks to Theorem \ref{thm:ic-via-dp}, which relies only on our weak notion of sequential DP, we are able to utilize the Hedge algorithm, resulting in a logarithmic dependence on the size of the mechanism class $|\Pi|$.
\subsection{Discussion I: Contextualizing the Regret Bound}
\label{subsec:discussion-context-reg}
In this subsection, we provide several pointers to help interpret and contextualize our main regret bound in Theorem \ref{thm:hedge-mech}. It is important to beware that the regret bounds cited below, which have been suitably converted to meet our convention, are \textit{not} directly comparable to ours due to differences in problem settings and strategy-proofness guarantees.

The work most closely related to ours is by \citet{abernethy2019learning}, who present an approximately IC online auction within the Bayesian framework. They establish a regret bound of $O(T/\epsilon+(k\epsilon)^{1/3}T)$ with an IC guarantee, where $\epsilon$ is the DP parameter that appears in the approximate IC guarantee and $k$ is the maximum number of rounds an agent can participate in. This bound becomes vacuous when $k$ grows with $T$, leading to a superlinear regret even when $\epsilon$ tends to $0$. Specifically, if we formulate this bound in terms of our $h$ parameter so that $k\in O(T^h)$, the bound becomes $O(T^{1+h/3})$ for fixed $\epsilon$. In contrast, our bound remains sublinear $O(T^{(1+h)/2})$ whenever $k$ is sublinear.

Other relevant works by \citet{amin2013learning} and \citet{liu2018learning} consider \textit{strategic regret} in the online posted-price auction, which is a regret measured with respect to non-myopic buyers and is defined as $\textsf{SReg}(A):=\textsf{Rev}(p^\star,v)-\textsf{Rev}(A,B)$, where $\textsf{Rev}(v,p^\star)$ is the revenue by the best-fixed price $p^*$ with respect to the buyer's true values $v:=(v_1,...,v_T)$ and $\textsf{Rev}(A,B)$ is the revenue of algorithm $A$ with respect to non-myopic buyer $B$.

For $T^h$ long-sightedness (i.e., effective horizon), \citet{amin2013learning} offer strategic regret bounds of $O(T^{1/2+h})$ when the buyer's valuation is fixed in all rounds (Theorem 1 therein), which is slightly worse than our $O(T^{(1+h)/2})$ bound, and $O(c T^{\sqrt{h}})$ for stochastic valuations, where $c$ is an instance-dependent factor (Theorem 2 therein). However, these settings are less general than our adversarial setting.

\citet{liu2018learning} demonstrate a regret bound of $\widetilde{O}(T^{(3+h)/4})$ for adversarial settings when there are continuous values, but their analysis can obtain a $\widetilde{O}(T^{(1+h)/2})$ bound for discrete values which match our regret bound. Note that their algorithm is specific to single-item auctions, and conceptually more complicated than ours.

As seen above, the recurring theme of regret bounds that are ``robust'' against non-myopic agents is their dependency on the agents' long-sightedness. Notably, all of the above bounds become vacuous when the long-sightedness becomes linear, that is, when $h=1$. This phenomenon aligns with the negative result given in \citet{amin2013learning}.

\subsection{Discussion II: Limitations and Extensions}
\label{subsec:discussion-lim-ext}

Our construction relies on the commitment mechanism and our regret bound depends on the power of the commitment mechanism to penalize untruthful agents. Our examples in~\S\ref{sec:examples} and Appendix~\ref{app:resource-alloc} show that it is often possible to find such a commitment mechanism, quantify its penalty gap, and obtain transparent bounds on the regret. That said, designing a good commitment mechanism may not always be straightforward for a problem.

To simplify the exposition, we only consider \textit{exact} commitment mechanisms that strictly punish any type report that is not \textit{equal} to the true type.
The existence of an exact commitment mechanism often relies on the assumption that agents' types are discrete (for continuous types, the gap may be arbitrarily close to zero). Hence, in this work, we only consider mechanisms with discrete agent types. Our construction can be extended to incorporate continuous types via an approximate commitment mechanism, leading to an approximate IC guarantee.
However, to simplify the presentation and focus on the main insights, we focus on discrete types and consider \text{exact} commitment mechanisms.

In mechanism design, we are often interested in \emph{individual rationality} (IR), which stipulates that the agent's utility when participating truthfully is larger than when not participating.
While we have not explicitly considered IR, by choosing an IR mechanism class and 
commitment mechanism, we straightforwardly obtain an IR online mechanism.

We also reiterate that we assumed $|\Pi|$ was finite in~\S\ref{subsec:learning-via-hedge} when applying the Hedge algorithm. If this is not the case, we can choose a discretization with good coverage, apply Hedge on the discretization, and then control the discretization error to obtain final regret bounds.

Finally, we also wish to mention that our methods can be quite straightforwardly extended to contextual settings, where an online mechanism takes in a piece of information $x\in X$ (such as UserID) as an auxiliary input to choose a mechanism for each round. In this case, we can simply instantiate the Hedge over a subset of policies $\Pi^X$.
\section{Application Examples}\label{sec:examples}
We now apply our framework to develop no-regret NICOMs for three different mechanism design problems:
\emph{(i)} Online facility location,
\emph{(ii)} VCG auctions with reserve prices under \textit{ex-post} externality, and
\emph{(iii)} online optimal resource allocation.
Of these, \emph{(i)} and \emph{(iii)} are examples of mechanism design without money.
Due to space constraints, we only present the first two problems here, and defer the online optimal resource allocation to Appendix~\ref{app:resource-alloc}.

\subsection{Online Facility Location.}
Consider a government trying to locate $k$ moving facilities (e.g. moving clinics, libraries) every day. The local government wants to locate these facilities to minimize the average distance between each resident and their closest facility, weighted by utilization rate. To simplify the problem, we assume $n$ residents are located on a finite $1$d grid $I_m:=\cbra{0,\frac{1}{m},\frac{2}{m},\dots,1}$, hence $\Theta_i=I_m$ for all $i\in[n]$. A social outcome $s\in S$ is a tuple $((x_1,o_1),\dots,(x_k,o_k))$ where $x_l\in I_m$ and $o_l\subseteq [n]$, where $x_l$ indicates the location of facility $l$ and $o_l$ is the set of residents who are allowed to use facility $l$.

Then, for an outcome $s$, resident $i$'s utility on round $t$ when their location is $\theta_{i,t}$  is
\begin{align*}
    u_i(\theta_{i,t},s):=1-\min_{l\in [k]\,\text{s.t.}\,i\in o_l}|x_l-\theta_{i,t}|,
\end{align*}
and $u_i(\theta_{i,t},s):=0$ if $i\notin o_l$ for any $l\in[k]$. The government's objective on round $t$ given types $\theta_t$ is 
\begin{align*}
    G_t(\theta_t,s):=\sum_{i\in[n]}r_{i,t}\cdot u_i(\theta_{i,t},s).
\end{align*}
Here, $r_{i,t}\in\unit$ is the utilization rate of a facility by the resident $i$ during round $t$.

\paragraph{Posted-location mechanism.}
We consider the class of ``posted-location'' mechanisms, which is parametrized by $w\in I^k_m$ representing the locations of facilities. This class $\Pipl_{m,k}$ is defined as follows: for each $w\in I^k_m$, $\pipl_w\in\Pipl_{m,k}$ is a mechanism that deterministically outputs $\rbra{(w_1,[n]),\dots,(w_k,[n])}\in S$ regardless of input types. That is, it deterministically sets the locations of the facilities at $w\in I^k_m$, and allows any agents to use any facility so that agents can pick the closest facility to them. As the mechanism's output does not depend on the reported types, it is \NIC{} in a single round; but if we naively use previous reports to set future locations,  the online mechanism is not \NIC.

\paragraph{Commitment mechanism.}
We construct a commitment mechanism $\picom$ for the online facility locating as follows: Given type reports $b_t:=(b_{i,t})_{i\in[n]}$, $\picom$ samples $l\in[m]$ uniformly at random and place one facility in $\frac{l-1}{m}$ and other $k-1$ facilities in $\frac{l}{m}$ (i.e. $x_1=\frac{l-1}{m}$ and $x_l=\frac{l}{m}$ for $2\leq l \leq k$). Then, $\picom$ sets $o_l=\argmin_{i\in[n]\,\text{s.t.}\,\,b_{i,t}\neq\perp}|x_l-b_{i,t}|$ for each $l\in[k]$, that is, facility $l$ is only accessible to residents whose closest facility is $l$ according to their reports.

Then, the penalty gap $\beta(\picom)$ is at least $m^{-2}$.
To see this, suppose $b_{i,t}>\theta_{i,t}$. If $\picom$ chooses an outcome $s$ such that $x_1=\theta_{i,t}$, the resident $i$ suffers at least $m^{-1}$ utility loss compared to that of reporting $\theta_{i,t}$, as they have to use some facility $l\geq 2$ located at $x_l=\theta_{i,t}+m^{-1}$. As $\picom$ chooses such an outcome with probability at least $m^{-1}$, the expected utility loss is at least $m^{-2}$. By a similar argument, the same holds when $b_{i,t}<\theta_{i,t}$. Hence $\beta(\picom)\geq m^{-2}$.

\paragraph{Regret bound.}
Finally, as there are $|I^k_m|$ possible placements of facilities, we have that $\big|\Pipl_{m,k}\big|=|I_m|^k=(m+1)^k$. Hence, there is a \NICOM\ for $\Pipl_{m,k}$ with regret $O\!\rbra{ (k\log m + m^2)\cdot T^{\frac{1+h}{2}}}$ for any agent class $\A$ such that $\sup_{A\in\A_T}\alpha(A)\in O(T^h)$ for some $h\in\unito$. 

\subsection{VCG Mechanism under \textit{Ex-post} Externalities}
Consider a government selling a permit to $n$ firms via VCG mechanism \cite{nisan2007algorithmic}. We use $o\subseteq[n]$ to denote the set of firms who get the permit and $p\in\unit^n$ is the price of the permit charged to each firm. Then, the social outcome space is $S:=\cbra{(o,p):o\subseteq [n],p\in\unit^n}$. For all $i\in[n]$, let type space $\Theta_i$ be $I_m:=\cbra{0,\frac{1}{m},\frac{2}{m},\dots,1}$, representing a firm's possible valuations of the permit.

The government's objective function is the social welfare: $G_t(\theta_t,s):=\sum_{i\in o}\theta_{i,t} - c_t(o)$ where $\theta_{i,t}$ is firm $i$'s valuation of the permit at round $t$ and $c_t(\cdot)$ is an \textit{ex-post} externality (such as pollution) revealed after round $t$. To balance the total value and externalities, the government introduces bidder-specific reserve prices $w\in I_m^n$, where $w_i$ is the minimum price of the permit charged to firm $i$.

\paragraph{VCG mechanism with bidder-specific reserve prices.} We define a class of mechanisms $\Pivcg_{n,m}$ as follows: Given input bids $b:=(b_i)_{i\in[n]}$, $\pivcg_w\in\Pivcg$ with bidder-specific reserve prices $w\in I_m^n$ decides the set of winners $o^\star\subseteq\bit^n$ as follows:
\begin{align*}
    o^\star\in\argmax_{o\subseteq\P_w(b)}\sum_{j\in\P_w(b)}b_j\cdot\ind\big[j\in o\big],
\end{align*}
where $\P_w(b):=\cbra{i\in[n]:b_i\neq\perp,\,b_i\geq w_i}$, and charges $p^\star\in\unit^n$ where
\begin{align*}
    p^\star_i&=\max_{o\subseteq\P_w(b)}\bigg(\sum_{j\in\P_w(b)\setminus\{i\}}b_j\cdot\ind\big[j\in o\big]\bigg)-\sum_{j\in\P_w(b)\setminus\{i\}}b_j\cdot\ind\big[j\in o^\star(b,w)\big].
\end{align*}
For any reserve price, this mechanism is \NIC\ (actually, \DSIC) with respect to the quasi-linear utility $u_i(\theta_i,s):=(\theta_i-p_i)\cdot\ind[i\in o]$ for $s\in S$ \cite{nisan2007algorithmic}. 

\paragraph{Commitment mechanism.}
Our choice of $\picom$ for this problem is as follows: Given bids $b_t:=(b_{i,t})_{i\in[n]}$, $\picom$ samples $z\in I_{2m}$ uniformly at random and outputs $(o,p)$ where 
\begin{align*}
    o=\cbra{i\in[n]:b_{i,t}\neq\perp,\,b_{i,t}\geq z}\ \text{and}\ p_i=z\ \forall i\in[n].
\end{align*}
Then, the penalty gap $\beta(\picom)$ is at least $(2m)^{-2}$ which can be seen as follows: Suppose $b_{i,t}<\theta_{i,t}$. If $\picom$ chooses an outcome such that $p_i=\theta_{i,t}-(2m)^{-1}$, the firm $i$ suffers at least $(2m)^{-1}$ utility loss compared to that of reporting $\theta_{i,t}$. Similarly, the firm $i$ suffers at least $(2m)^{-1}$ when $b_{i,t}>\theta_{i,t}$, if $\picom$ chooses an outcome such that $p_i=\theta_{i,t}+(2m)^{-1}$. As these events happens with probability at least $(2m)^{-1}$, $\beta(\picom)\geq(2m)^{-2}$.

\paragraph{Regret bound.}
To sum up, we have that  $\beta(\picom)\geq(2m)^{-2}$ and $|\Pivcg_{n,m}|=|I_m^n|=(m+1)^n$. Therefore, by Theorem \ref{thm:hedge-mech}, there is a \NICOM\ for this problem with regret $O\!\rbra{(n\log m + m^2)\cdot T^{\frac{1+h}{2}}}$ under any agent class $\A$ such that $\alpha_T(\A)\in O(T^h)$ for some $h\in\unito$.
\section{Summary}\label{sec:conclusion}
We studied an online mechanism design setting, where we wish to design an online mechanism that is Nash incentive-compatible (NIC) throughout the entire learning process against non-myopic agents, while maximizing an application-specific objective.
In contrast with prior work, our method applies to general mechanism design problems in online adversarial settings. We show that a weak notion of DP, which is satisfied by the Hedge algorithm, is sufficient to achieve NIC. This allows us to obtain regret bounds that scale logarithmically in the size of the mechanism class.
We apply our framework to design NIC online learning procedures for three different mechanism design problems.

\bibliography{ref}
\bibliographystyle{icml2024}

\newpage
\appendix
\onecolumn
\numberwithin{equation}{section}
\section{Proof of Theorem \ref{thm:ic-via-dp}}\label{app:proof:thm:ic-via-dp}
In this appendix, we give a proof of Theorem \ref{thm:ic-via-dp} restated in below.

\ICviaDP*

\begin{proof}
    Fix any $i\in[n]$. For any strategy profile $(\sigma_i,\sigma_{-i})$, we can rewrite agent $i$'s expected utility as follows:
    \begin{align*}
        u_i\!\rbra{\sigma_i,\sigma_{-i};\Mwdp,A}&:=\expec\!\sbra{\,\sum_{t\in\T_i}\gamma_i(t)u_i(\theta_{i,t},s_t)}\\
        &=\sum_{t\in\T_i}\gamma_i(t)\expec_{b_{\leq t}}\!\bigg[\expec_{s_t}\!\Big[u_i(\theta_{i,t},s_t)\,\Big|\,b_{\leq t}\Big]\bigg]\\
        &=\expec_{b_{\leq T}}\!\sbra{\,\sum_{t\in\T_i}\gamma_i(t)\expec_{s_t}\!\Big[u_i(\theta_{i,t},s_t)\,\Big|\,b_{\leq t}\Big]},
        \numberthis\label{eq:ui-expec}
    \end{align*}
    where the first expectation is over all randomness and $\expec_{b_{\leq t}}$ $\rbra{\expec_{b_{\leq T}}}$ is the expectation with respect to the distribution of $b_{\leq t}$ $\rbra{b_{\leq T}}$ induced by the interaction between agents $A$ employing strategy $(\sigma_i,\sigma_{-i})$ and online mechanism $\Mwdp$ via Protocol \ref{protocol:1}. And $\expec_{s_t}[\,\cdot\,|b_{\leq t}]$ is the expectation with respect to the following distribution:
    \begin{align*}
        \D(b_{\leq t}):=(1-\lambda)\cdot\qwdp_t(b_{<t})(b_t)+\lambda\cdot\picom(b_t)\in\Delta(S),
    \end{align*}
    where $\qwdp_t(b_{<t})(b_t)\in\Delta(S)$ is the distribution defined by process $s\sim\pi(b_t),\,\pi\sim\qwdp_t(b_{<t})$.
    
    For each $t\in[T]$, let $b_{i,\leq t}:=(b_{i,\tau})_{\tau\in[t]}$ where $b_{i,\tau}:=\perp$, if and only if $\tau\notin\T_i$ and $b_{-i,\leq t}:=(b_{j,\leq t})_{j\in[n]\setminus\{i\}}$, define $\theta_{i,\leq t}$, $\theta_{-i,\leq t}$ similarly but with true types. With these notations, \eqref{eq:ui-expec} can be rewritten as
    \begin{align*}
        u_i\!\rbra{\sigma_i,\sigma_{-i};\Mwdp,A}=\expec_{b_{i,\leq T},\,b_{-i,\leq T}}\!\sbra{\,\sum_{t\in\T_i}\gamma_i(t)\expec_{s_t}\!\Big[u_i(\theta_{i,t},s_t)\,\Big|\,b_{i,\leq t},\,b_{-i,\leq t}\Big]}.
    \end{align*}
    Suppose other players are truthful, that is, $\sigma_{-i}=\theta_{-i}$. Substituting $\sigma_{-i}=\theta_{-i}$ to the above, we have that $b_{-i,\leq T}=\theta_{-i,\leq T}$ and hence
    \begin{align*}
        u_i\!\rbra{\sigma_i,\theta_{-i};\Mwdp,A}=\expec_{b_{i,\leq T}}\!\sbra{\,\sum_{t\in\T_i}\gamma_i(t)\expec_{s_t}\!\Big[u_i(\theta_{i,t},s_t)\,\Big|\,b_{i,\leq t},\,\theta_{-i,\leq t}\Big]\ \middle|\ \theta_{-i,\leq T}},
        \numberthis\label{eq:iterated-expec}
    \end{align*}
    where $\expec_{b_{i,\leq T}}$ is the distribution over agent $i$'s reported types $(b_{i,t})_{t\in\T_i}$ induced by playing $\sigma_i$ while other agents' strategies are $\theta_{-i}$. Using H{\"o}lder's inequality to \eqref{eq:iterated-expec}, we have that
    \begin{align*}
        u_i\!\rbra{\sigma_i,\theta_{-i};\Mwdp,A}\leq\sum_{t\in\T_i}\gamma_i(t)\expec_{s_t}\!\Big[u_i(\theta_{i,t},s_t)\,\Big|\,\hat{b}_{i,\leq t},\,\theta_{-i,\leq t}\Big]
        \numberthis\label{eq:u_i-holder}
    \end{align*}
    for some $\hat{b}_{i,\leq T}$. 
    
    Now, we show that the above upper bound is attained by $\hat{b}_{i,\leq T}$ such that $\hat{b}_{i,t}=\theta_{i,t}$ for all $t\in\T_i$. To see this, define a strategy modification function $\phi_i:\prod^T_{t=1}(\Theta_i\cup\{\perp\})\rightarrow\prod^T_{t=1}(\Theta_i\cup\{\perp\})$ as follows:
    \begin{enumerate}
        \item Given input $b_{i,\leq T}$, $\phi_i$ checks whether there exists $t\in\T_i$ such that $b_{i,t}\neq\theta_{i,t}$.
        \item If there is no such $t\in[T]$, return unmodified $(b_{i,t})_{t\in\T_i}$.
        \item Otherwise, let $t^\star$ be the largest $t\in\T_i$ such that $b_{i,t}\neq\theta_{i,t}$, and return the modified input $(b_{i,t})_{t\in\T_i}$ where $b_{i,t^\star}$ is set to $\theta_{i,t^\star}$ and other elements are unchanged.
    \end{enumerate}
    Let $\phi_i(b_{i,\leq t})$ denote the restriction of $\phi_i(b_{i,\leq T})$ up to round $t$. Then, we claim that for any $b_{i,\leq T}$, the following holds.
    \begin{align*}
      \sum_{t\in\T_i}\gamma_i(t)\expec_{s_t}\!\Big[u_i(\theta_{i,t},s_t)\,\Big|\,b_{i,\leq t},\,\theta_{-i,\leq t}\Big]\leq\sum_{t\in\T_i}\gamma_i(t)\expec_{s_t}\!\Big[u_i(\theta_{i,t},s_t)\,\Big|\,\phi_i\!\rbra{b_{i,\leq t}},\,\theta_{-i,\leq t}\Big].
      \numberthis\label{eq:fi-claim}
    \end{align*}

    \textbf{Proof of \eqref{eq:fi-claim}.} First, it suffices to consider only $b_{i,\leq T}$ such that $b_{i,t}\neq\theta_{i,t}$ for some $t\in\T_i$ as otherwise $\phi_i(b_{i,\leq T})=b_{i,\leq T}$. Moreover, as $\phi_i$ only modifies $b_{i,t^\star}$ where $t^\star$ is the largest $t\in\T_i$ such that $b_{i,t}\neq\theta_{i,t}$, it suffices to show
    \begin{align*}
      \sum_{t\in\T_i\cap[t^\star:T]}\gamma_i(t)\expec_{s_t}\!\Big[u_i(\theta_{i,t},s_t)\,\Big|\,b_{i,\leq t},\,\theta_{-i,\leq t}\Big]\leq\sum_{t\in\T_i\cap[t^\star:T]}\gamma_i(t)\expec_{s_t}\!\Big[u_i(\theta_{i,t},s_t)\,\Big|\,\phi_i\!\rbra{b_{i,\leq t}}\!,\,\theta_{-i,\leq t}\Big],
    \end{align*}
    where $[t^\star:T]:=\cbra{t^\star,t^\star+1,\dots,T}$. By the linearity of expectation, $\sum_{t\in\T_i}\gamma_i(t)\expec_{s_t}\!\big[u_i(\theta_{i,t},s_t)\big|b_{i,\leq t},\theta_{-i,\leq t}\big]$ can be unfolded as
    \begin{align*}
        (1-\lambda)\cdot\gamma_i(t)\sum_{t\in\T_i\cap[t^\star:T]}\expec_{s_t\sim\A_t(b_{i,\leq t})}u_i(\theta_{i,t},s_t)+\lambda\cdot\gamma_i(t)\sum_{t\in\T_i\cap[t^\star:T]}\expec_{s_t\sim\B_t(b_{i,t})}u_i(\theta_{i,t},s_t),
        \numberthis\label{eq:assemble-1}
    \end{align*}
    where the former expectations are over $\A_t(b_{i,\leq t}):=\qwdp_t\big(b_{i,<t},\theta_{-i,\leq t}\big)\big(b_{i,t},\theta_{-i,t}\big)$ and the later expectations are over $\B_t(b_{i,t}):=\picom\big(b_{i,t},\theta_{-i,t}\big)$. 
    
    Since $\pi(\cdot,w)$ is \NIC\ and $\phi_i$ modifies $b_{i,t^\star}$ to $\theta_{i,t^\star}$, the following holds for $t=t^\star$.
    \begin{align*}
        \expec_{s_t\sim\A_{t^\star}(b_{i,\leq t^\star})}u_i(\theta_{i,t^\star},\,s_{t^\star})\leq\expec_{s_t\sim\A_{t^\star}(\phi_i(b_{i,\leq t^\star}))}u_i(\theta_{i,t^\star},\,s_{t^\star}).
        \numberthis\label{eq:assemble-2}
    \end{align*}
    
    On the other hand, for each $t>t^\star$, $\A_t$ is $\eta$-DP with respect to $b_{i,<t}$ by the post-processing lemma (Lemma \ref{lem:post-dp}). Hence, for all $t>t^\star$ we obtain
    \begin{align*}
        \expec_{s_t\sim\A_t(b_{i,\leq t})}\!\Big[\,u_i(\theta_{i,t},s_t)+1\,\Big]\leq e^\eta\cdot \expec_{s_t\sim\A_t(\phi_i(b_{i,\leq t}))}\!\Big[\,u_i(\theta_{i,t},s_t)+1\,\Big],
    \end{align*}
    which follows from the definition of the weak $\eta$-DP sequence (Definition \ref{def:weak-dp-seq}) , the identity $\expec[X]=\int^\infty_0\pr[X\geq x]dx$ for any non-negative random variable $X$ and $u_i(\cdot,\cdot)\in\munit$. As $e^\eta\leq 1+ 2\eta$ for $\forall \eta\in[0,1]$, the above implies
    \begin{align*}
        \expec_{s_t\sim\A_t(b_{i,\leq t})}u_i(\theta_{i,t},s_t)\leq  \expec_{s_t\sim\A_t(\phi_i(b_{i,\leq t}))}u_i(\theta_{i,t},s_t)+4\eta,
        \numberthis\label{eq:assemble-3}
    \end{align*}
    for all $t\in\T_i\cap[t^\star+1:T]$.

    Next, we consider the later expectations over $\B_t$ in \eqref{eq:assemble-1}. Since $\B_t(b_{i,t}):=\picom\big(b_{i,t},\theta_{-i,t}\big)$ does not depend on past reported types $b_{i,t<t}$, for all $t\in\T_i\cap[t^\star+1:T]$, we have that
    \begin{align*}
        \expec_{s_t\sim\B_t(b_{i,t})}u_i(\theta_{i,t},s_t)=\expec_{s_t\sim\B_t(b_{i,t})}u_i(\theta_{i,t},s_t).
        \numberthis\label{eq:assemble-4}
    \end{align*}
    On the other hand, for $t=t^\star$ term, by the definition of $\beta(\picom)$ (Definition \ref{def:penalty-gap}), we obtain
    \begin{align*}
        \expec_{s_t\sim\B_{t^\star}(b_{i,t^\star})}u_i(\theta_{i,t^\star},s_{t^\star})\leq\expec_{s_t\sim\B_{t^\star}(\theta_{i,t^\star})}u_i(\theta_{i,t^\star},s_{t^\star})-\beta(\picom),
        \numberthis\label{eq:assemble-5}
    \end{align*}
    as $\phi_i$ modifies $b_{i,t^\star}$ to $\theta_{i,t^\star}$.

    Finally, by combining \eqref{eq:assemble-1} -- \eqref{eq:assemble-5}, we obtain
    \begin{align*}
      \sum_{t\in\T_i\cap[t^\star:T]}\gamma_i(t)\expec_{s_t}\!\Big[u_i(\theta_{i,t},s_t)\,\Big|\,\phi_i\!\rbra{b_{i,\leq t}}\!,\,\theta_{-i,\leq t}\Big]&-\sum_{t\in\T_i\cap[t^\star:T]}\gamma_i(t)\expec_{s_t}\!\Big[u_i(\theta_{i,t},s_t)\,\Big|\,b_{i,\leq t},\,\theta_{-i,\leq t}\Big]\\
      &\geq\gamma_i(t^\star)\cdot\lambda\cdot\beta(\picom)-4\eta\cdot\sum_{t\in\T_i\cap[t^\star+1:T]}\gamma_i(t)\\
      &\geq\gamma_i(t^\star)\cdot\lambda\cdot\beta(\picom)-4\eta\cdot\min\!\rbra{\sum^T_{t=t^\star}\gamma_i(t),\,\gamma_i(t^\star)\cdot|\T_i|}\tag{by the monotonicity of $\gamma_i(\cdot)$}\\
      &\geq 4\eta\cdot\gamma_i(t^\star)\cdot\rbra{\frac{\lambda\cdot\beta(\picom)}{4\eta}-\alpha(A)}\\
      &\geq 0,
    \end{align*}
    as $\lambda\beta(\picom)/4\eta\geq\alpha(A)$. Hence, we have \eqref{eq:fi-claim}.\quad\qedsymbol

    Now, we apply \eqref{eq:fi-claim} recursively to itself: 
    \begin{align*}
        \sum_{t\in\T_i}\gamma_i(t)\expec_{s_t}\!\Big[u_i(\theta_{i,t},s_t)\,\Big|\,b_{i,\leq t},\,\theta_{-i,\leq t}\Big]
        &\leq\sum_{t\in\T_i}\gamma_i(t)\expec_{s_t}\!\Big[u_i(\theta_{i,t},s_t)\,\Big|\,\phi_i(b_{i,\leq t}),\,\theta_{-i,\leq t}\Big]\\
        &\leq\sum_{t\in\T_i}\gamma_i(t)\expec_{s_t}\!\Big[u_i(\theta_{i,t},s_t)\,\Big|\,\phi_i\circ\phi_i(b_{i,\leq t}),\,\theta_{-i,\leq t}\Big]\\
        &\leq\cdots
    \end{align*}
    Since there are only finitely many rounds, after a finite number of applications of $\phi_i$, the above process reaches to a fixed point $b_{i,\leq T}=\phi_i(b_{i,\leq T})$ such that $b_{i,t}=\theta_{i,t}$ for all $t\in\T_i$. Hence,
    \begin{align*}
        u_i\!\rbra{\sigma_i,\theta_{-i};\Mwdp,A}&\leq\sum_{t\in\T_i}\gamma_i(t)\expec_{s_t}\!\Big[u_i(\theta_{i,t},s_t)\,\Big|\,\hat{b}_{i,\leq t},\,\theta_{-i,\leq t}\Big],\tag{by \eqref{eq:u_i-holder}}\\
        &\leq\sum_{t\in\T_i}\gamma_i(t)\expec_{s_t}\!\Big[u_i(\theta_{i,t},s_t)\,\Big|\,\theta_{i,\leq t},\,\theta_{-i,\leq t}\Big]\\
        &=u_i\!\rbra{\theta_i,\theta_{-i};\Mwdp,A}.
    \end{align*}
    Therefore, the strategy profile $\theta:=(\theta_i)_{i\in[n]}$ is a Nash equilibrium.
\end{proof}

\begin{lemma}[Post-processing \cite{dwork2014algorithmic}]
    Let $f:Y\rightarrow Y'$ be an arbitrary function. If $q$ is $\eta$-DP then $f\circ q:X^m\rightarrow\Delta(Y')$ is also $\eta$-DP.
    \label{lem:post-dp}
\end{lemma}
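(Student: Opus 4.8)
The plan is to reduce the claim directly to the $\eta$-DP guarantee already assumed for $q$, by pulling back output events through $f$. Since $f$ is a deterministic function, the composed mechanism $f\circ q$ first samples $y\sim q(x)$ and then returns $f(y)$; hence for any input $x\in X^m$ the induced distribution $(f\circ q)(x)\in\Delta(Y')$ is exactly the pushforward of $q(x)$ under $f$. The whole argument rests on the observation that an event about the output in $Y'$ is the same as a preimage event about the intermediate value in $Y$, so the DP inequality for $q$ can be invoked verbatim on that preimage.

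Concretely, I would fix an arbitrary $A'\subseteq Y'$ and two inputs $x,x'\in X^m$ that differ in at most one entry, and set $A:=f^{-1}(A')=\{y\in Y:f(y)\in A'\}$. The defining identity of the pushforward gives
\begin{align*}
    \pr_{y'\sim (f\circ q)(x)}\big(y'\in A'\big)=\pr_{y\sim q(x)}\big(y\in f^{-1}(A')\big)=\pr_{y\sim q(x)}\big(y\in A\big),
\end{align*}
and likewise for $x'$. Because $A\subseteq Y$, I would then apply the $\eta$-DP property of $q$ (Definition \ref{def:dp}) to the set $A$ and the neighboring inputs $x,x'$, obtaining
\begin{align*}
    \pr_{y\sim q(x)}\big(y\in A\big)\leq e^\eta\,\pr_{y\sim q(x')}\big(y\in A\big).
\end{align*}
Chaining the identity for $x$, this inequality, and the identity for $x'$ yields $\pr_{y'\sim (f\circ q)(x)}(y'\in A')\leq e^\eta\,\pr_{y'\sim (f\circ q)(x')}(y'\in A')$. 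Since $A'$, $x$, and $x'$ were arbitrary, this is precisely the $\eta$-DP condition for $f\circ q$.

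There is no real obstacle here, as the statement is standard; the only thing to get right is the bookkeeping that output events correspond to preimage events, which is exactly what makes the assumed inequality applicable with the \emph{same} $\eta$ (note the privacy parameter is unchanged). If one wished to allow a randomized post-processing map (a kernel $f:Y\rightarrow\Delta(Y')$ rather than a deterministic function), the same argument would go through after conditioning on the internal randomness of $f$ and integrating, using that an $e^\eta$-multiplicative bound is preserved under averaging; but for the deterministic $f$ as stated, the pushforward computation above suffices.
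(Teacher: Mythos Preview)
Your proof is correct and is exactly the standard pushforward/preimage argument for the post-processing lemma. The paper does not actually prove this lemma---it is merely quoted from \cite{dwork2014algorithmic}---so there is no ``paper's own proof'' to compare against; your argument is the expected one.
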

\section{Proof of Theorem \ref{thm:hedge-mech}}\label{app:proof:thm:hedge-mech}
In this appendix, we give a proof of Theorem \ref{thm:hedge-mech} restated in below.

\HedgeLearner*
\begin{proof}
    Recall that $\Mhedge$ is given as follows: For given $T\in\nat$, $\Mhedge$ outputs the following mechanism for each $t\in[T]$:
    \begin{align*}
        \pi_t(\cdot):=(1-\lambda)\cdot\qhedge_{\eta,t}(b_{<t})(\cdot)+\lambda\cdot\picom(\cdot),
    \end{align*}
    where $\big(\qhedge_{\eta,t}(b_{<t})\big)_{t\in[T]}$ is the exponential weights sequence (Definition \ref{def:hedge-update}) with objectives $F_t(\theta_t,\pi):=\expec_{s\sim\pi(\theta_t)}G_t(\theta_t,s)$.
    
    First, we show that, for every $T\in\nat$, there is a choice of $(\lambda,\eta)$ such that the truth-telling profile is a Nash equilibrium of any game induced by $\Mhedge$ and $A\in\A_T$. By Lemma \ref{lem:hedgedp} and Theorem \ref{thm:ic-via-dp}, this holds whenever
    \begin{align*}
        \frac{\lambda \beta(\picom)}{16\eta}\geq\sup_{A\in\A_T}\alpha(A):=\alpha_T(\A)\ \text{ and }\ \eta\in\unit.
        \numberthis\label{eq:nic-cond}
    \end{align*}
    Then, for $\eta:=1/\sqrt{\alpha_T(\A)\,T}\in\unit$, there is a choice of $\lambda$ for each $T\in\nat$ such that \eqref{eq:nic-cond} holds for all $T\in\nat$ and $\lambda\in\Theta\Big(\frac{\sqrt{\alpha_T(\A)/T}}{\beta(\picom)}\Big)$. Moreover, to choose such $\lambda$s, one only needs the knowledge of the whole class $\A$; hence such choices can be made before starting any round.

    Next, we show that $\Mhedge$ guarantees no-regret: $\sup_{A\in\A_T}\reg\big(\theta_{\leq T},\Mhedge,\Pi;A\big)\in o(T)$. By the linearity of expectation and the standard regret bound of the exponential weights algorithm \cite{arora2012multiplicative,hazan2016introduction}, we have that
    \begin{align*}
        \sup_{A\in\A_T}\reg\big(\theta_{\leq T},\Mhedge,\Pi;A\big)&\leq 4\eta\,T+\frac{1}{\eta}\log|\Pi|+\lambda\,T.
    \end{align*}
    Since $\eta:=1/\sqrt{\alpha_T(\A)\,T}\in\unit$ and  $\lambda\in\Theta\Big(\frac{\sqrt{\alpha_T(\A)/T}}{\beta(\picom)}\Big)$, the LHS of the above bound is in $O\big((\log|\Pi|+\beta(\picom)^{-1})\cdot\sqrt{\alpha_T(\A)\,T}\big)$.
\end{proof}
\section{Application: Online Optimal Resource Allocation}\label{app:resource-alloc}
In this appendix, we provide an additional application of Theorem \ref{thm:hedge-mech}: Designing a Nash incentive-compatible online mechanism for optimal resource allocation. 

\paragraph{Setting.}
Consider there are $k$ CPUs in a cluster shared by $n$ users. A user's type $\theta\in[n]$ represents her true demand on CPU. The social outcome $s$ is the CPU allocation vector, that is, $s_i\in\nat$ represents the number of CPUs allocated to user $i$. Hence, the outcome space is $S_{n,k}:=\big\{s\in\nat^n:\sum_{i\in[n]}s_i=k\big\}$. We assume all users have identical utility functions given by $u(\theta,s):=\ind\sbra{s\geq \theta}$, that is, an agent experiences utility $1$ with enough resources and $0$ for insufficient resources. Then, the cluster manager's objective function is the number of jobs completed, weighted by job importance. That is, $G_t(\theta_t,s):=\sum_{i\in[n]}r_{i,t}\cdot u(\theta_{i,t},s)$, where $r_{i,t}\in\unit$s are \textit{ex-post} weighting factors (such as \textit{ex-post} resource utilization rates), specified by the cluster manager.

\paragraph{Target mechanisms.} The cluster manager chooses any DSIC/NIC off-the-shelf single-round resource allocation mechanism with parameters. An example would be the max-min fair allocation mechanism $\Pimaxmin_{n,k}$ whose parameter is the initial endowment vector $w\in S_{n,k}$~\citep{demers1989analysis,kandasamy2020online}. Another example would be the posted-allocation mechanism whose outcome is always its parameter vector $w\in S_{n,k}$. With any choice of parametrized mechanism, the cluster manager's goal is to compete against the best parameter in terms of the cumulative social welfare $\sum_{t\in[T]}G_t(\theta_t,s)$.

\paragraph{Commitment mechanism.} Our commitment mechanism $\picom$ is as follows: Given input demands $b\in[k]^n$, $\picom(b)$ samples $i\in [n]$ and $j\in [2k]$ uniformly at random. If $j> b_i$, set $s_i=b_i$ and $s_j=\big\lfloor\frac{k-b_i}{n-1}\big\rfloor$ for all $j\neq i,\perp$. Otherwise, set $s_i=0$ and $s_j=\big\lfloor\frac{k}{n-1}\big\rfloor$ for all $j\neq i,\perp$. Then, we show the following.
\begin{lemma}
    For the commitment mechanism described above, we have that $\beta(\picom)\geq (2nk)^{-1}$.
\end{lemma}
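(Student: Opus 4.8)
The plan is to bound $\beta(\picom)$ directly from Definition~\ref{def:penalty-gap}: fix an arbitrary agent $a$, true demand $\theta_a\in[k]$, misreport $b_a\in[k]$ with $b_a\neq\theta_a$, and an arbitrary profile $b_{-a}$ of the others' reports, and show that
\[
\Delta \;:=\; \expec_{s\sim\picom(\theta_a,b_{-a})}u(\theta_a,s)\;-\;\expec_{s\sim\picom(b_a,b_{-a})}u(\theta_a,s)\;\geq\;\tfrac{1}{2nk}.
\]
The key structural observation is that $\picom$ first draws an index (call it $i$, uniform over the agents, so $\pr[i=a]\geq 1/n$) and a threshold (call it $z\in[2k]$ to avoid the notational clash with the index variable in the mechanism's statement), and that when the drawn index is \emph{not} $a$, the allocation $s_a$ is $\floor{(k-b_i)/(n-1)}$ or $\floor{k/(n-1)}$, which depends only on $z$ and on $b_i$ for that $i\neq a$ — hence not at all on $a$'s own report. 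So, conditioning on the drawn index $i$ and using the tower property, every $i\neq a$ term is identical in the two expectations and cancels, leaving $\Delta=\pr[i=a]\cdot\big(\expec[u(\theta_a,s)\mid i=a,\text{report }\theta_a]-\expec[u(\theta_a,s)\mid i=a,\text{report }b_a]\big)$.

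Next I would evaluate the two conditional expectations in the event $\{i=a\}$. Given $i=a$ and a report $b\in\{\theta_a,b_a\}$, the mechanism sets $s_a=b$ when $z>b$ and $s_a=0$ when $z\le b$; since $u(\theta_a,s)=\ind[s_a\geq\theta_a]$ and $\theta_a\geq 1$, the conditional expected utility equals $\frac{2k-b}{2k}\,\ind[b\geq\theta_a]$ (using $\pr[z>b]=(2k-b)/2k$, valid as $b\le k<2k$). Truthful reporting ($b=\theta_a$) gives $\frac{2k-\theta_a}{2k}$. For the misreport there are two sub-cases: if $b_a<\theta_a$ the indicator vanishes, so the conditional difference is $\frac{2k-\theta_a}{2k}\geq\frac12\geq\frac{1}{2k}$ (using $\theta_a\le k$); if $b_a>\theta_a$ the conditional difference is $\frac{2k-\theta_a}{2k}-\frac{2k-b_a}{2k}=\frac{b_a-\theta_a}{2k}\geq\frac{1}{2k}$ since types are integers. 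Either way the conditional difference is at least $\frac{1}{2k}$, so $\Delta\geq\frac1n\cdot\frac1{2k}=\frac1{2nk}$; minimizing over $a,\theta_a,b_a,b_{-a}$ gives $\beta(\picom)\geq(2nk)^{-1}$.

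I expect the obstacle here to be bookkeeping rather than anything conceptual. The three points that need care are: (i) stating precisely the support over which the index $i$ is drawn and arguing $\pr[i=a]\geq 1/n$ regardless of how many agents actually participate (and of how $\perp$-indices are handled); (ii) justifying that in the branch $i\neq a$ the induced law of $s_a$ is \emph{exactly} the same under both of $a$'s reports, so those contributions cancel identically and not merely approximately; and (iii) keeping the two misreport sub-cases separate, since only over-reporting is tight and is what forces the $1/(2k)$ factor (the under-reporting direction is penalized much more strongly). No individual step is hard, but the argument is easy to garble if the randomness is not first cleanly decomposed by conditioning on the pair $(i,z)$.
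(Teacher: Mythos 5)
Your proof is correct and takes essentially the same route as the paper's: both condition on whether the uniformly drawn index equals the agent in question (the paper absorbs the $i\neq a$ branch into a constant $C$ that cancels), compute the conditional acceptance probability $(2k-b)/(2k)$, and split into over-reporting (where the integer gap gives $1/(2k)$) and under-reporting (where $\theta_a\le k$ gives $1/2$), yielding $\beta(\picom)\geq 1/(2nk)$ after multiplying by the $1/n$ selection probability.
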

\begin{proof}
    First, recall the definition of $\beta(\picom)$ here. The penalty gap $\beta(\picom)$ is the minimum of the following quantity,
    \begin{align*}
        \expec_{s\sim\pi(\theta_i,b_{-i})}u(\theta_i,s)-\expec_{s\sim\pi(b_i,b_{-i})}u(\theta_i,s),
    \end{align*}
    over all $i$, $b_{-i}$ and $\theta_i,b_i\in\Theta_i$ such that $b_i\neq\theta_i$. Fix an $i\in[n]$ and $b_{-i}$. Then, consider two cases $b_i>\theta_i$ and $b_i<\theta_i$.
    
    \textbf{Case $b_i>\theta_i$}.\ \ In this case, we have that
    \begin{align*}
        \expec_{s\sim\pi(b_i,b_{-i})}u(\theta_i,s)&=\rbra{1-\frac{1}{n}}\cdot C + \frac{1}{n}\cdot\frac{2k-b_i}{2k}\\
        &=\rbra{1-\frac{1}{n}}\cdot C +\frac{1}{n}\rbra{\frac{2k-\theta_i}{2k}-\frac{b_i-\theta_i}{2k}}\\
        &=\bunderbrace{\rbra{1-\frac{1}{n}}\cdot C +\frac{1}{n}\cdot\frac{2k-\theta_i}{2k}}{\expec_{s\sim\pi(\theta_i,b_{-i})}u(\theta_i,s)}-\frac{b_i-\theta_i}{2nk}\\
        &\leq\expec_{s\sim\pi(\theta_i,b_{-i})}u(\theta_i,s)-\frac{1}{2nk}\tag{as $b_i-\theta_i\geq 1$}
    \end{align*}
    where $C$ is a constant that does not depend on the input demand of user $i$. Hence, we have that
    \begin{align*}
        \expec_{s\sim\pi(\theta_i,b_{-i})}u(\theta_i,s)-\expec_{s\sim\pi(b_i,b_{-i})}u(\theta_i,s)\geq \frac{1}{2nk},
    \end{align*}
    when $b_i>\theta_i$.

    \textbf{Case $b_i<\theta_i$}.\ \ In this case, observe that
    \begin{align*}
        \expec_{s\sim\pi(b_i,b_{-i})}u(\theta_i,s)&=\rbra{1-\frac{1}{n}}\cdot C\\
        &=\rbra{1-\frac{1}{n}}\cdot C + \frac{1}{n}\cdot\frac{2k-\theta_i}{2k}-\frac{1}{n}\cdot\frac{2k-\theta_i}{2k}\\
        &=\expec_{s\sim\pi(\theta_i,b_{-i})}u(\theta_i,s)-\frac{1}{n}\cdot\frac{2k-\theta_i}{2k}\\
        &\leq \expec_{s\sim\pi(\theta_i,b_{-i})}u(\theta_i,s)-\frac{1}{2n}\tag{as $\theta_i\leq k$}
    \end{align*}
    Hence, we have that 
    \begin{align*}
        \expec_{s\sim\pi(\theta_i,b_{-i})}u(\theta_i,s)-\expec_{s\sim\pi(b_i,b_{-i})}u(\theta_i,s)\geq \frac{1}{2n}\geq\frac{1}{2nk},
    \end{align*}
    when $b_i<\theta_i$.
\end{proof}

\paragraph{Regret bound.} To sum up, we have that $\beta(\picom)\geq (2nk)^{-1}$ for this problem. If the cluster manager's goal is to maximize the social welfare with respect to $\Pimaxmin_{n,k}$, our Theorem \ref{thm:hedge-mech} implies there is a \NICOM\ for $\Pimaxmin_{n,k}$ with regret $O\Big( (n\log k + nk)\cdot T^{\frac{1+h}{2}} \Big)$ under any agent class $\A$ such that $\alpha_T(\A)\in O(T^h)$ for some $h\in\unito$.
\section{Related works}\label{app:related-work}
\paragraph{Dynamic pricing.}
Online learning for auctions traces back to the seminal work of \citet{kleinberg2003value}. Online learning of posted-price auctions is often referred to as dynamic pricing, and there is a wealth of literature on this subject following the aforementioned seminal works. Therefore, we direct readers to a review of this topic~\citep{den2015dynamic}.

\paragraph{Strategy-robust pricing.}
One of the first works to investigate non-myopic agents was by \citet{amin2013learning}, who introduced a notion of regret that remains robust under strategic buyers. They devised a robust pricing algorithm for the repeated single-bidder single-item posted price mechanism, assuming that the buyer's value is drawn independently from an unknown value distribution.

This setting has subsequently been generalized in two ways. The first is the contextual setting, where the seller observes contextual information (such as UserID) while selling an item. \citet{amin2014repeated} were the first to study the problem of robustly learning prices under the contextual setting. \citet{drutsa2020optimal}, \citet{zhiyanov2020bisection}, \citet{golrezaei2019dynamic}, and \citet{deng2019robust} further optimized regret bounds or extended the problem setting by considering noisy observations.

While previous works relied on distributional assumptions, a generalization of the previous stochastic setting to an online adversarial setup was explored by \citet{liu2018learning}. Particularly, \citet{liu2018learning} extended this to a multi-bidder second-price auction in an adversarial environment, using a strongly differentially private algorithm for the expert problem. In comparison to our work, previous studies focus on scenarios where the objective is to learn a single parameter (price or reserve price).

Motivated by online advertising, a line of work has studied bandit learning in single-item auctions under truthfulness constraints~\citep{babaioff2009characterizing,babaioff2015truthful,devanur2009price}.
Here, buyers place a single bid at the beginning of the game representing their value for the bid, and on each round, the seller observes the number of clicks for the buyer she chooses for the round.
When compared to our setting, as the buyers place only a single bid at the beginning, they have significantly less opportunity to manipulate outcomes in their favor.

\paragraph{Differential privacy and mechanism design.}
Since the seminal work of \citet{mcsherry2007mechanism}, differential privacy and mechanism design have become closely intertwined fields. Subsequently, \citet{nissim2012approximately} and \citet{kearns2014mechanism} expanded the scope of mechanism design through differential privacy. Notably, \citet{nissim2012approximately} introduced the commitment mechanism and constructed general mechanisms from the exponential mechanism, including mechanisms without money~\citep{moulin1980strategy,procaccia2013approximate}. Our work can be seen as a conceptual extension of \citet{nissim2012approximately}'s work to an online adversarial setting. As discussed earlier, \citet{liu2018learning} employed differential privacy to design a strategy-robust online learning algorithm for single-item auctions. Additionally, as discussed in the main text, \citet{abernethy2019learning} applied a differentially private primitive to design a strategy-robust version of Myerson's optimal auction in the online setting. For further exploration of the growing trend of using differential privacy to control incentives, we recommend reading a review by \citet{zhang2021more}.

\end{document}